\documentclass[copyright,creativecommons]{eptcs}
\providecommand{\event}{GandALF 2026} 

\usepackage{iftex}

\ifpdf
  \usepackage{underscore}         
  \usepackage[T1]{fontenc}        
\else
  \usepackage{breakurl}           
\fi

\title{Optimizing Proof-Search via Linearization for Gödel-Löb Logic with Tree-Hypersequents}
\author{Tim S. Lyon \qquad\qquad Omar Y. A. A. Taher\thanks{This work is partly supported by BMFTR (Federal Ministry of Research, Technology and Space) in DAAD project 57616814 (SECAI, School of Embedded Composite AI, \url{https://secai.org/}) as part of the program Konrad Zuse Schools of Excellence in Artificial Intelligence.}
\institute{TU Dresden \\ Nöthnitzer Straße 46, 01187 Dresden, Germany}
\email{\quad timothy\_stephen.lyon@tu-dresden.de \quad \quad omarkabdawiyehia@gmail.com}
}
\def\titlerunning{Optimizing Proof-Search via Linearization for Gödel-Löb Logic with Tree-Hypersequents}
\def\authorrunning{T.S. Lyon \& O.Y.A.A. Taher}

\usepackage{amssymb, amsmath, amsthm}
\usepackage{synttree, bussproofs, 
 rotating, xcolor, mathrsfs}
\usepackage{esvect, pgf, tikz, color}
\usetikzlibrary{arrows, automata}
\usepackage[ruled,vlined,linesnumbered]{algorithm2e}
\usepackage[all]{xy}
\usepackage{algpseudocode}
\usepackage{scalerel,stackengine}
\usepackage{changepage, enumerate, proof, relsize,comment}
\usepackage{multicol}
\usepackage{stmaryrd}
\usepackage{enumitem}

\newtheorem{theorem}{Theorem}[section]

\newtheorem{definition}[theorem]{Definition}
\newtheorem{example}[theorem]{Example}

\newtheorem{remark}[theorem]{Remark}

\makeatletter
\def\moverlay{\mathpalette\mov@rlay}
\def\mov@rlay#1#2{\leavevmode\vtop{%
   \baselineskip\z@skip \lineskiplimit-\maxdimen
   \ialign{\hfil$\m@th#1##$\hfil\cr#2\crcr}}}
\newcommand{\charfusion}[3][\mathord]{
    #1{\ifx#1\mathop\vphantom{#2}\fi
        \mathpalette\mov@rlay{#2\cr#3}
      }
    \ifx#1\mathop\expandafter\displaylimits\fi}
\makeatother

\newcommand{\permabove}{\text{\raise1pt\hbox{\rotatebox[origin=c]{45}{$\rightharpoonup$}}}}
\newcommand{\permbelow}{\text{\raise1pt\hbox{\rotatebox[origin=c]{45}{$\downarrow$}}}}

\newcommand{\treegl}{\mathsf{CSGL}}

\newcommand{\logicgl}{\mathsf{GL}}
\newcommand{\semrel}{\models}
\newcommand{\rel}{\mathcal{R}}
\newcommand{\tel}{\mathcal{T}}

\newcommand{\idi}{\mathsf{id_{1}}}
\newcommand{\idii}{\mathsf{id_{2}}}

\newcommand{\boxl}{\Box\mathsf{L}}
\newcommand{\boxr}{\Box\mathsf{R}}

\newcommand{\fourru}{\mathsf{4L}}

\newcommand{\lnsg}{\mathcal{A}}

\newcommand{\lsep}{\sslash}
\newcommand{\axi}{(\mathrm{A}1)}
\newcommand{\axii}{(\mathrm{A}2)}
\newcommand{\axiii}{(\mathrm{A}3)}
\newcommand{\axK}{(\mathrm{K})}
\newcommand{\axlob}{(\mathrm{L})}
\newcommand{\mpon}{(\mathrm{mp})}
\newcommand{\nec}{(\mathrm{n})}
\newcommand{\formint}{f}

\newcommand{\branch}{\mathcal{B}}

\newcommand{\iffi}{\textit{iff} }

\renewcommand{\phi}{\varphi}

\newcommand{\wk}{\mathsf{w}}

\newcommand{\ctrl}{\mathsf{cL}}
\newcommand{\ctrr}{\mathsf{cR}}

\newcommand{\imp}{\rightarrow}

\newcommand{\prf}{\pi}

\newcommand{\ru}{\mathsf{r}}

\newcommand{\size}[1]{\mathsf{s}(#1)}

\newcommand{\cut}{\mathsf{cut}}

  \newtheorem{innercustomthm}{Theorem}
\newenvironment{customthm}[1]
  {\renewcommand\theinnercustomthm{#1}\innercustomthm}
  {\endinnercustomthm}

\newcommand{\calc}{\mathsf{CSGL}}

\newcommand{\lab}{\mathsf{Lab}}

\newcommand{\atm}{\mathsf{Atm}}

\newcommand{\lang}{\mathscr{L}}

\newcommand{\id}{\mathsf{id}}

\newcommand{\sar}{\vdash}

\newcommand{\tseq}{T} 

\newcommand{\fig}{Figure} 

\newcommand{\assign}{\mu}

\newcommand{\impleft}{{\rightarrow}\mathsf{L}}
\newcommand{\impright}{{\rightarrow}\mathsf{R}}
\newcommand{\treesequent}{ $\tel, \Gamma \sar \Delta$ }
\newcommand{\botrule}{\bot\mathsf{L}}

\newcommand{\true}{\mathtt{true}}
\newcommand{\false}{\mathtt{false}}

\newcommand{\prove}{\mathtt{prove}}

\newcommand{\ndrule}{\Box_\mathsf{DB}}
\newcommand{\dbr}{\Box_\mathsf{DB}}

\newcommand{\set}[1]{\{#1\}}

\newcommand{\ct}{\mathsf{ct}}
\newcommand{\ctv}{\mathsf{V}}
\newcommand{\cte}{\mathsf{E}}
\newcommand{\ctl}{\mathsf{L}}
\newcommand{\tval}{\true} 
\newcommand{\fval}{\false} 

\newcommand{\pspace}{\mathsf{PSPACE}}
\newcommand{\ptime}{\mathsf{PTIME}}
\newcommand{\expspace}{\mathsf{EXPSPACE}}

\newcommand{\scomp}{\odot}

\newcommand{\impl}{{\rightarrow}\mathsf{L}}
\newcommand{\impr}{{\rightarrow}\mathsf{R}}
\newcommand{\botl}{\bot\mathsf{L}}

\newcommand{\boxlc}{(\Box\mathsf{4L})}

\newcommand{\sufo}{\mathsf{sufo}}

\newcommand{\M}{\mathbf{M}}
\newcommand{\W}{\mathbf{W}}
\newcommand{\R}{\mathbf{R}}
\newcommand{\V}{\mathbf{V}}

\begin{document}
\maketitle

\begin{abstract}
We answer a question posed by Poggiolesi concerning a syntactic decidability proof for $\logicgl$ in the tree-hypersequent system $\calc$, and resolve a challenge identified by Maggesi and Perini Brogi, who sought a $\pspace$ proof-search algorithm for $\logicgl$ in expressive sequent-based formalisms. We work with a notational variant of $\calc$ formulated in terms of (labeled) tree sequents. Our answer is complexity-optimal: we present a proof-search algorithm that decides the (in)validity of formulae and runs in $\pspace$, matching the known $\pspace$-completeness of $\logicgl$. To achieve this, we introduce a \emph{linearization method}, which constructs only a single branch of a derivation and of a tree sequent at a time, avoiding the exponential blowup typical of naive proof-search in sequent formalisms. We show how to systematically combine fragments of tree sequents generated during proof-search to extract finite counter-models, which serves as a theoretical device for establishing the correctness of the algorithm when proof-search fails. Finally, we show that every valid formula admits a proof consisting solely of line sequents, which correspond to linear nested sequents. This establishes a connection between depth-first proof-search and linear nested sequent calculi. Our results not only answer the aforementioned questions, but also provide new insights into proof-search and correctness arguments in tree sequent systems for modal logics.
\end{abstract}

\section{Introduction}


Provability logics are a class of modal logics in which the modal operator $\Box$ is interpreted as “it is provable that” with respect to a given arithmetical theory. Among these, Gödel-Löb logic ($\logicgl$) is particularly prominent. It originates in the work of Löb, who identified a set of conditions satisfied by the provability predicate of Peano Arithmetic (PA), leading to both axiomatic and semantic characterizations of $\logicgl$. The logic can be axiomatized as an extension of the basic modal logic $\mathsf{K}$ by adding \emph{Löb’s axiom} $\Box(\Box \phi \rightarrow \phi) \rightarrow \Box \phi$ and is sound and complete with respect to transitive and conversely well-founded relational models~\cite{Seg71}. In a landmark result, Solovay~\cite{Sol76} showed that $\logicgl$ precisely captures the provability logic of PA, in the sense that it proves exactly the modal principles that PA can establish about its own provability predicate.


The logic $\logicgl$ has a well-developed structural proof theory and admits a number of cut-free sequent-style calculi. Sequent systems in the style of Gentzen were first provided by Sambin and Valentini in the early 1980s~\cite{SamVal80,SamVal82}; see also Avron~\cite{Avr84}. Since then, a variety of alternative systems have been introduced, either by enriching the underlying sequent structure or by generalizing the notion of proof itself~\cite{Lyo25,ManKas24,Pog09b,Sha14,Sim94}. These systems have proven useful for studying (meta-)logical properties of $\logicgl$, including cut-elimination~\cite{ManKas24} and constructive Lyndon interpolation~\cite{Sha14}. 

In this paper, we study proof-search in the \emph{tree-hypersequent system} $\treegl$ introduced by Poggiolesi~\cite{Pog09b}. Tree-hypersequents are trees of Gentzen sequents and are more traditionally known as \emph{nested sequents}. This formalism is also known to be equivalent to the formalism of \emph{tree sequents}, which use labeled sequent notation (cf.~\cite{GorRam12,IshKik07,LyoOst24}). (NB. In this paper, we take tree-hypersequents, nested sequents, and tree sequents to be synonymous with each other as all formalisms are notational variants of one another.) The formalism was introduced independently by Kashima~\cite{Kas94} and Bull~\cite{Bul92} with further influential works provided by Br{\"u}nnler~\cite{Bru09} and Poggiolesi~\cite{Pog09,Pog09b}. Such systems arose out of a call for cut-free sequent-style systems for logics not known to possess a cut-free Gentzen system, such as the tense logic $\mathsf{K_{t}}$ and the modal logic $\mathsf{S5}$. Such systems exhibit fundamental admissibility and invertibility properties, making them well-suited for automated reasoning tasks~\cite{LyoTiuGorClo20,LyoGom22}. For a comprehensive survey on nested (and equivalent) sequent systems, see Lellmann and Poggiolesi~\cite{LelPog24}.

In the concluding section of Poggiolesi~\cite{Pog09b}, the author raises the question of how decidability for $\logicgl$ might be established using the system $\treegl$. More recently, Maggesi and Perini Brogi~\cite{MagPer23} implemented a decision procedure for $\logicgl$ in HOL Light based on Negri's labeled sequent calculus $\mathsf{G3KGL}$~\cite{Neg14}. However, they observe that their procedure belongs to $\expspace$ and identify achieving $\pspace$ as ``the ideal goal''~\cite[Section~7]{MagPer23}. In this paper, we answer both of these questions: we present a proof-search algorithm for $\logicgl$ based on $\treegl$ that runs in $\pspace$, thereby achieving complexity-optimality. Since proofs between $\treegl$ and $\mathsf{G3KGL}$ are inter-translatable in $\ptime$~\cite{GorRam12,LyoOst24}, our approach is also applicable to $\mathsf{G3KGL}$ and can be used to resolve the $\pspace$ goal for labeled sequent-based proof-search.

Our investigation of these problems led to several technical and conceptual insights, which we outline below. We note that we work with a syntactic variant of Poggiolesi's $\calc$ in this paper that uses \emph{tree sequents}~\cite{IshKik07} since it simplifies much of our work and definitions. Our main contributions are as follows:
\begin{description}[leftmargin=!, labelwidth=1em, labelsep=.5em]

\item[$\bullet$] First, naive proof-search algorithms for nested sequent systems are typically at least exponential in both time and space. This is because derivations in such systems are ``trees of trees,'' which can produce exponentially large structures. Since $\logicgl$ is $\pspace$-complete (see~\cite[Lemma 18.26]{ChaZak97}), one should expect the existence of a proof-search algorithm in $\calc$ that runs in $\pspace$. To address this, we show how to formulate a proof-search algorithm that constructs only a single branch of a derivation and of a tree sequent at a time. We refer to this method here as \emph{linearization}. To the best of our knowledge, this is the first proof-search algorithm of this kind for a tree (i.e., nested) sequent system, and the technique appears adaptable to similar systems as well. 

\item[$\bullet$] Second, because our linearization method generates only single branches of tree sequents at a time, the usual methods for obtaining counter-models from failed proof-search (see, e.g.,~\cite{LyoGom22,TiuIanGor12}) are obstructed. Each branch of a tree sequent represents only a fragment of a counter-model, which complicates proving correctness of the algorithm. To address this, we show how all such fragments can be systematically combined when proof-search fails to define a finite counter-model of the input, which constitutes a further contribution.


\item[$\bullet$] Third, as a corollary to our proof-search algorithm, we find that every valid formula has a proof in $\calc$ in which every tree sequent is actually a \emph{line}. It is known that such line sequents are notational variants of \emph{linear nested sequents}, introduced by Lellmann~\cite{Lel15}. Our work shows that ``linearized'' proof-search can be used to extract linear nested sequent calculi for modal logics. Indeed, the proofs produced by our algorithm are variants of those in the recently introduced linear nested sequent calculus $\mathsf{LNGL}$ for $\logicgl$~\cite{Lyo25}.

\item[$\bullet$] Last, the above results answer the question posed by Poggiolesi in~\cite{Pog09b} and the open problem of Maggesi and Perini Brogi~\cite{MagPer23}.

\end{description}

Sambin and Valentini~\cite{SamVal80} and Negri~\cite{Neg14} provided proof-search algorithms for $\logicgl$ within the Gentzen sequent formalism and labeled sequent formalism, respectively.\footnote{Although our paper is concerned with the sequent formalism, we note that Gor\'{e} and Kelly~\cite{GorKel07} implemented a tableau algorithm for
$\logicgl$ in the Tableau Work Bench~\cite{AbaGor09}; however, no complexity analysis was provided. We believe their algorithm is likely $\pspace$, yet this is independent of our goal of addressing the challenge of Maggesi and Perini Brogi~\cite{MagPer23}.} However, neither paper established complexity bounds, and neither procedure appears to run in $\pspace$. In fact, the implementation of proof-search in $\mathsf{G3KGL}$ by Maggesi and Perini Brogi~\cite{MagPer23} is known to run in $\expspace$. Our approach differs in that we achieve complexity-optimality within an expressive sequent formalism (tree-hypersequents), demonstrating that the $\expspace$ blowup observed in~\cite{MagPer23} is not inherent to such formalisms but can be overcome with suitable proof-search techniques.

The efficiency of our algorithm relies on two key ideas: forgoing the support of counter-model extraction and the use of disjunctive branching. Counter-model extraction is an algorithm design choice that supports the output of a counter-model witnessing the invalidity of the input, in the same way that proofs can be output to witness the validity of the input. The existing proof-search procedures mentioned above all output counter-models when proof-search fails, which causes an exponential blowup since counter-models can be of exponential size. The key to achieving the $\pspace$ upper bound is thus the following trade-off: our algorithm forgoes direct counter-model extraction in exchange for reduced space consumption. Our linearization method avoids
the exponential blowup by outputting only a Boolean verdict;
counter-model extraction is no longer immediate and instead serves as a \emph{theoretical device} for proving correctness when proof-search fails.


The second idea is the concept of \emph{disjunctive branching}, which was motivated by the work of Mints~\cite{Min00}. While Mints employs (essentially) hypersequents to perform disjunctive branching, we take an
alternative approach and use a \emph{`disjunctive inference rule'} whose conclusion is provable whenever at least one premise is provable. This allows us to further reduce the space needed to carry out proof-search. We note that our use of disjunctive inference rules is closely related
to the disjunctive (or `existential') rules employed in tableau
algorithms (cf.~\cite{AbaGor09}). However, our approach is distinct in that we perform disjunctive branching at two levels: at the level of branches in a derivation, which is common practice for reducing complexity, but also at the level of tree sequents, which appears to be a new idea.

\paragraph{Outline of Paper.} In Section~\ref{sec:prelims}, we recall the language and semantics of Gödel-Löb logic $\logicgl$. In Section~\ref{sec:calculus}, we recall and discuss Poggiolesi's tree-hypersequent system $\calc$~\cite{Pog09b} and recast the system in the formalism of tree sequents (cf.~\cite{GorRam12,IshKik07}) to simplify our work. In Section~\ref{sec:proof-search}, we introduce our proof-search algorithm, prove it correct and terminating, and briefly discuss the relationship between proofs generated by our algorithm and the linear nested sequent formalism. Last, in Section~\ref{sec:conc}, we conclude and discuss future work. 






\section{Gödel-Löb Provability Logic}\label{sec:prelims}

We let $\atm := \{p,q,r,\ldots\}$ be a countable set of \emph{atoms} and define the language $\lang$ to be the set of formulae generated by the following grammar in BNF:
$$
\phi ::= p \ |  \ \bot \ | \ \phi \rightarrow \phi \ | \ \Box \phi
$$
where $p$ ranges over $\atm$. We use $\phi$, $\psi$, $\chi$, $\ldots$ to denote formulae in $\lang$ and 
define $\neg \phi := \phi \rightarrow \bot$. 
The \emph{length} of a formula $\phi$, denoted $|\phi|$, is defined to be the number of symbols it contains and we let $\sufo(\phi)$ denote the set of all \emph{subformulae} of $\phi$, defined in the expected way.


\begin{definition}[Model]\label{def:model} We define a \emph{model} to be a tuple $\M = (\W,\R,\V)$ such that
\begin{itemize}[leftmargin=!, labelwidth=.5em, labelsep=.5em]

\item $\W$ is a non-empty set of worlds $w$, $u$, $v$, $\ldots$; 

\item $\R \subseteq \W \times \W$ is transitive and conversely well-founded;\footnote{We note that $R$ is conversely well-founded \iffi it is free of infinite ascending $R$-chains.}

\item $\V : \atm \to 2^{\W}$ is a \emph{valuation function}.

\end{itemize}
\end{definition}

\begin{definition}[Semantic Clauses]\label{def:semantics} 
We define the \emph{satisfaction} of a formula $\phi$ in a model $\M$ at world $w$, written $\M, w \semrel \phi$, recursively as follows:
\begin{itemize}[leftmargin=!, labelwidth=.5em, labelsep=.5em]

\item $\M,w \semrel p$ \iffi $w \in \V(p)$;

\item $\M,w \not\semrel \bot $;

\item $\M,w \semrel \phi \rightarrow \psi$ \iffi $\M,w \not\models \phi$ or $\M,w \semrel \psi$;

\item $\M,w \semrel \Box \phi$ \iffi $\forall u \in \W$, if $(w,u) \in \R$, then $\M,u \semrel \phi$.

\end{itemize}
We define $\M \semrel  \phi$ \iffi $\forall w \in \W$, $\M,w \semrel \phi$. We write $ \semrel \phi$ and say that $\phi$ is \emph{valid} \iffi for all models $\M$, $\M \semrel \phi$. \emph{G{\"o}del-L{\"o}b logic ($\logicgl$)} is defined to be the set $\logicgl \subset \lang$ of all valid formulae.
\end{definition}

As shown by Segerberg~\cite{Seg71}, the logic $\logicgl$ can be axiomatized by extending the axioms of the modal logic $\mathsf{K}$ with Löb's axiom $\Box(\Box \phi \imp \phi) \imp \Box \phi$.

\section{Tree Sequents}\label{sec:calculus}

In this section, we review (a notational variant of) Poggiolesi's tree-hypersequent system $\calc$ for $\logicgl$~\cite{Pog09b}. We opt for a notational variant of $\calc$ that uses the labeled sequent syntax (cf.~\cite{Sim94,Vig00}) as it simplifies the formulation of our proof-search algorithm; however, we \emph{stress} that this system is Poggiolesi's $\calc$ despite the notational change. 
It was already observed by Gor{\'e} and Ramanayake~\cite{GorRam12} that restricting labeled sequents to be trees, rather than more general, binary graphs (which may be disconnected or include cycles), yields tree sequents (cf.~\cite{IshKik07}), which are a notational variant of tree-hypersequents and nested sequents. Via this observation, we are free to employ the labeled sequent syntax without any negative repercussions, that is, the structural properties of Poggiolesi's system $\calc$ will be retained in spite of this notational change.

We let $\lab = \{x, y, z, \ldots\}$ be a countably infinite set of \emph{labels}, define a \emph{relational atom} to be an expression of the form $xRy$ with $x,y \in \lab$, and define a \emph{labeled formula} to be an expression of the form $x : \phi$ such that $x \in \lab$ and $\phi \in \lang$. We use upper-case Greek letters $\Gamma, \Delta, \Sigma, \ldots$ to denote finite sets of labeled formulae. For a set $\rel$ of relational atoms and a set $\Gamma$ of labeled formulae, we let $\lab(\rel)$, $\lab(\Gamma)$, and $\lab(\rel,\Gamma)$ be the sets of all labels occurring therein. For a set $\Gamma$ of labeled formulae and $x \in \lab$, we define $\Gamma(x) := \{\phi \ | \ x : \phi \in \Gamma\}$ 
and for sets $\Gamma$ and $\Delta$ of labeled formulae, we let $\Gamma, \Delta$ denote the union of the two.

\begin{figure}
	
\begin{center}
\begin{tabular}{c c c}
\AxiomC{}
\RightLabel{$\idi$}
\UnaryInfC{$\tel, \Gamma, x : p \sar x : p, \Delta$}
\DisplayProof

&

\AxiomC{}
\RightLabel{$\idii$}
\UnaryInfC{$\tel, \Gamma, x : \Box \phi \sar x : \Box \phi, \Delta$}
\DisplayProof

&

\AxiomC{ } 
\RightLabel{$\botl$}
\UnaryInfC{$\tel,\Gamma, x: \bot \sar \Delta$}
\DisplayProof
\end{tabular}
\end{center}

\begin{center}
\begin{tabular}{c c}
\AxiomC{$\tel, \Gamma, x : \psi \sar \Delta$}
\AxiomC{$\tel, \Gamma \sar x : \phi, \Delta$}
\RightLabel{$\impl$}
\BinaryInfC{$\tel, \Gamma, x : \phi \rightarrow \psi \sar \Delta$}
\DisplayProof

&

\AxiomC{$\tel, xRy, \Gamma, x : \Box \phi, y : \Box \phi \sar \Delta$}
\RightLabel{$\fourru$}
\UnaryInfC{$\tel, xRy, \Gamma, x : \Box \phi \sar \Delta$}
\DisplayProof
\end{tabular}
\end{center}

\begin{center}
\begin{tabular}{c c c}
\AxiomC{$\tel, \Gamma, x : \phi \sar x : \psi, \Delta$}
\RightLabel{$\impr$}
\UnaryInfC{$\tel, \Gamma \sar x : \phi \rightarrow \psi, \Delta$}
\DisplayProof

&

\AxiomC{$\tel, xRy, \Gamma, x : \Box \phi, y : \phi \sar \Delta$}
\RightLabel{$\boxl$}
\UnaryInfC{$\tel, xRy, \Gamma, x : \Box \phi \sar \Delta$}
\DisplayProof

&

\AxiomC{$\tel, xRy, \Gamma, y : \Box \phi \sar y : \phi, \Delta$}
\RightLabel{$\boxr^{\dag}$}
\UnaryInfC{$\tel, \Gamma \sar x : \Box \phi, \Delta$}
\DisplayProof
\end{tabular}
\end{center}

\caption{Tree Sequent Calculus $\textsf{CSGL}$ for $\logicgl$. The $\boxr$ rule is subject to a side condition $\dag$, namely, the rule is applicable only if the label $y$ is fresh.\label{fig:lab-calc}}
\end{figure}

A set $\tel$ of relational atoms is called a \emph{tree} \iffi the graph $G(\tel) := (V,E)$ forms a directed tree, where we define $V := \{x \ | \ x \in \lab(\tel)\}$ and $E := \{(x,y) \ | \ xRy \in \tel\}$.\footnote{A \emph{tree} is a graph such that there exists a unique directed path from a unique vertex $x$, called the \emph{root}, to every other vertex.} A \emph{tree sequent} is defined to be an expression of the form $\tel, \Gamma \sar \Delta$ such that (1) $\tel$ is a tree, (2) if $\tel \neq \emptyset$, then $\lab(\Gamma, \Delta) \subseteq \lab(\tel)$, and (3) if $\tel = \emptyset$, then $|\lab(\Gamma,\Delta)| = 1$, i.e., all labeled formulae in $\Gamma, \Delta$ share the same label. We note that conditions (1)--(3) ensure that each tree sequent forms a connected graph that is indeed of a tree shape. We use $\tseq$ and annotated versions thereof to denote tree sequents. 

Given a tree sequent $\tel, \Gamma \sar \Delta$, we refer to $\tel, \Gamma$ as the \emph{antecedent} and $\Delta$ as the \emph{consequent}. 
The \emph{root} of a tree sequent $\tel, \Gamma \sar \Delta$ is the unique label $x$ such that, for every other label $y \in \lab(\tel,\Gamma,\Delta)$, there exists a directed path of relational atoms in $\tel$ from $x$ to $y$. If $\tel = \emptyset$, the root is the single label $x$ occurring in all formulae of $\Gamma$ and $\Delta$. We adopt standard tree terminology when discussing tree sequents (e.g. root, branch, ancestor, leaf; see \cite[Chapter 11]{Ros19}).

We define a \emph{flat sequent} to be a tree sequent of the form $\Gamma \sar \Delta$, that is, a flat sequent is a sequent $\Gamma \sar \Delta$ without relational atoms and where every labeled formula in $\Gamma,\Delta$ shares the same label. Furthermore, we define a \emph{line sequent} $\tel, \Gamma \sar \Delta$ to be a tree sequent such that $\tel$ is a line, i.e., $\tel$ is of the form $x_{1}Rx_{2}, \ldots, x_{i{-}1}Rx_{i}$ for $i \in \mathbb{N}$ (cf.~\cite{LelPim15}). For two tree sequents $\tseq = \tel, \Gamma \sar \Delta$ and $\tseq' = \tel', \Sigma \sar \Pi$, we define their \emph{sequent composition} $\tseq \scomp \tseq'$ as: $(\tel, \Gamma \sar \Delta) \scomp (\tel', \Sigma \sar \Pi) := \tel, \tel', \Gamma, \Sigma \sar \Delta, \Pi.$ We note that a sequent composition will only be applied to tree
sequents whose shared labels form a single path from the root,
ensuring that the resulting sequent retains a tree structure.

Every tree sequent encodes a tree of flat sequents. Let $\tseq = \tel, xRy_{1}, \ldots, xRy_{n}, \Gamma \sar \Delta$ be a tree sequent such that $x$ is the root and $y_{1}, \ldots, y_{n}$ are all children of $x$. 
The tree $tr_{x}(\tseq)$ is graphically depicted below:
\begin{center}
\begin{tabular}{c c c}
\xymatrix@C+=-3.5em@R=1.5em{
		&  \overset{x}{\boxed{x : \Gamma(x) \sar x : \Delta(x)}}\ar@{->}[dl]\ar@{->}[dr] &   		\\
 tr_{y_{1}}(\tel, \Gamma \sar \Delta) & \hdots  & tr_{y_{n}}(\tel, \Gamma \sar \Delta)
}
\end{tabular}
\end{center}
As defined below, tree sequents may be interpreted directly over models.

\begin{definition}\label{def:seq-interp} Let $\M = (\W,\R,\V)$ be a model. An \emph{$\M$-assignment} is a function $\assign \colon \lab \to \W$. A tree sequent $\tel, \Gamma \sar \Delta$ is \emph{satisfied} on $\M$ with an $\M$-assignment $\assign$ \iffi the following holds: if for all $xRy \in \tel$ and $x : \phi \in \Gamma$, $(\assign(x),\assign(y)) \in \R$ and $\M, \assign(x) \models \phi$, then there exists a $y : \psi \in \Delta$ such that $\M, \assign(y) \models \psi$. A tree sequent is defined to be \emph{valid} \iffi it is satisfied on all models $\M$ with all $\M$-assignments; a tree sequent is defined to be \emph{invalid} otherwise.
\end{definition}

The tree sequent calculus $\calc$ is shown in \fig~\ref{fig:lab-calc}.\footnote{We remark that Poggiolesi’s original system uses multisets rather than sets in sequents; however, in our setting we may work with sets without any loss of generality.} It consists of three initial rules $\idi$, $\idii$, and $\botl$. We call the conclusion of such a rule an \emph{initial sequent}. The remaining rules are called \emph{logical rules} and introduce complex logical formulae into either the antecedent or consequent of the rule's conclusion. 
We note that the $\boxr$ rule is subject to a side condition, namely, the label $y$ must be \emph{fresh} in any application of the rule, i.e., the label $y$ is forbidden to occur in the conclusion. We remark that the freshness condition on $y$ ensures that $\boxr$ preserves the tree structure of sequents when applied bottom-up: since $y$ does not occur in the conclusion, adjoining the relational atom $xRy$ extends the tree by a fresh leaf rather than introducing a cycle or disconnected region.

We refer to the distinguished formulae in the conclusion (premises) of a rule as the \emph{principal formulae} (\emph{auxiliary formulae}, respectively). For example, $x : \Box \phi$ is principal in $\boxr$ and $xRy, y : \Box \phi, y : \phi$ are auxiliary. We also refer to the auxiliary formula $y : \Box \phi$ as the \emph{diagonal formula} in $\boxr$. In the subsequent section, we will explain how the diagonal formula helps ensure the termination of proof-search.

Using Definition~\ref{def:seq-interp}, it is straightforward to argue the soundness of the rules in $\calc$. For example, suppose the conclusion of $\fourru$ is invalid. Then there exists a model $\M = (\W, \R, \V)$ and an $\M$-assignment $\assign$ such that every relational atom and labeled formula in the antecedent is satisfied and no labeled formula in the consequent is satisfied under $\assign$. In particular, $(\assign(x), \assign(y)) \in \R$ and $\M, \assign(x) \models \Box \phi$. To show the premise is also invalid under the same model and assignment, it suffices to verify that $\M, \assign(y) \models \Box \phi$. Let $u$ be an arbitrary world in $\W$ with $(\assign(y), u) \in \R$. By transitivity, $(\assign(x), u) \in \R$, and since $\M, \assign(x) \models \Box \phi$, it follows that $\M, u \models \phi$. As $u$ was arbitrary, $\M, \assign(y) \models \Box \phi$, showing the premise invalid. The remaining rules can be argued sound in a similar fashion; see Poggiolesi~\cite{Pog09b} for details.


\begin{remark}\label{rmk:box-left-premise} Poggiolesi's original system included the following $\boxl'$ rule rather than the $\boxl$ rule. (NB. We have expressed this rule in labeled notation.) However, the left premise of the $\boxl'$ rule is provable in $\calc$ using $\fourru$, $\boxl$, and $\boxr$. We therefore opt to use the simpler $\boxl$ rule in $\calc$ rather than the $\boxl'$ rule to simplify our work.
\begin{center}
\AxiomC{$\tel, xRy, \Gamma, x : \Box \phi \sar y : \Box \phi, \Delta$}
\AxiomC{$\tel, xRy, \Gamma, x : \Box \phi, y : \phi \sar \Delta$}
\RightLabel{$\boxl'$}
\BinaryInfC{$\tel, xRy, \Gamma, x : \Box \phi \sar \Delta$}
\DisplayProof
\end{center}
\end{remark}

\begin{figure}[t]

\begin{center}
\begin{tabular}{c c}
\AxiomC{$\tel, \Gamma \sar \Delta$}
\RightLabel{$\wk$}
\UnaryInfC{$\tel, \tel', \Gamma, \Gamma' \sar \Delta, \Delta'$}
\DisplayProof

&

\AxiomC{$\tel, \Gamma \sar x : \phi, \Delta$}
\AxiomC{$\tel, \Gamma, x : \phi \sar \Delta$}
\RightLabel{$\cut$}
\BinaryInfC{$\tel, \Gamma \sar \Delta$}
\DisplayProof
\end{tabular}
\end{center}

\caption{Admissible rules.\label{fig:admiss-rules-lab}}
\end{figure}

A \emph{derivation} of a tree sequent $\tel, \Gamma \sar \Delta$ is defined to be a (potentially infinite) tree whose nodes are labeled with tree sequents such that (1) $\tel, \Gamma \sar \Delta$ is the root of the tree and (2) each parent node is the conclusion of a rule with its children the corresponding premises. 
A \emph{proof} is a finite derivation such that every leaf is an instance of an initial sequent. We use $\prf$ (potentially annotated) to denote derivations and proofs throughout the remainder of the paper. We define a \emph{branch} $\branch = \tseq_{0}, \tseq_{1}, \ldots, \tseq_{n}, \ldots$ to be a maximal path of tree sequents in a derivation such that $\tseq_{0}$ is the conclusion of the derivation and each nested sequent $\tseq_{i+1}$ (if it exists) is a child of $\tseq_{i}$. The \emph{height} of a derivation is defined in the usual way as the maximal length of a branch in the derivation.

\begin{example}\label{ex:lob-proof}
To demonstrate provability in $\calc$, we provide an example proof of $\Box(\Box p \imp p) \imp \Box p$, which is an instance of L\"ob's axiom.
\begin{center}
\AxiomC{}
\RightLabel{$\idi$}
\UnaryInfC{$xRy, x : \Box(\Box p \imp p), y : \Box p, y : p \sar y : p$}

\AxiomC{}
\RightLabel{$\idii$}
\UnaryInfC{$xRy, x : \Box(\Box p \imp p), y : \Box p \sar y : \Box p, y : p$}

\RightLabel{$\impl$}
\BinaryInfC{$xRy, x : \Box(\Box p \imp p), y : \Box p, y : \Box p \imp p \sar y : p$}
\RightLabel{$\boxl$}
\UnaryInfC{$xRy, x : \Box(\Box p \imp p), y : \Box p \sar y : p$}
\RightLabel{$\boxr$}
\UnaryInfC{$x : \Box(\Box p \imp p) \sar x : \Box p$}
\RightLabel{$\impr$}
\UnaryInfC{$\sar x : \Box(\Box p \imp p) \imp \Box p$}
\DisplayProof
\end{center}
Observe that $\boxr$ introduces a fresh label $y$ when bottom-up applied, as dictated by the side condition.
\end{example}



The rules displayed in \fig~\ref{fig:admiss-rules-lab} are \emph{admissible} in $\calc$. We define a rule to be \emph{admissible} (\emph{height-preserving admissible}) \iffi the following holds: if the premises of the rule have proofs (of height $h_{1}, \ldots, h_{n}$), then the conclusion of the rule has a proof (of height $h \leq \max\{h_{1}, \ldots, h_{n}\}$). Given an $n$-ary rule $\ru$, we define its \emph{$i$-inverse}, denoted $\ru^{-1}_{i}$, to be the rule whose sole premise is the conclusion of $\ru$ and whose conclusion is the $i^{th}$ premise of $\ru$. We say that $\ru$ is \emph{(height-preserving) invertible} \iffi $\ru^{-1}_{i}$ is (height-preserving) admissible for each $i \in [n]$.\footnote{We define $[n] = \set{1, \ldots, n}$ and therefore use $i \in [n]$ as a shorthand for $1 \leq i \leq n$.} 



\begin{theorem}[\cite{ManKas24,Pog09b}]\label{thm:tree-seq-properties} The tree sequent calculus $\treegl$ satisfies the following:
\begin{description}

\item[$(1)$] Each tree sequent of the form $\tel, \Gamma, x : \phi \sar x : \phi, \Delta$ is provable in $\treegl$;

\item[$(2)$] All non-initial rules are invertible in $\treegl$;

\item[$(3)$] The $\wk$ rule is height-preserving admissible in $\treegl$;\footnote{In the setting of tree sequents, the weakening rule $\wk$ is assumed to preserve the `tree shape' of tree sequents when applied.} 

\item[$(4)$] The $\cut$ rule is admissible in $\treegl$;\footnote{Maniwa and Kashima~\cite{ManKas24} proved that the cut-elimination algorithm given in~\cite{Pog09b} is incorrect and provided an alternative, correct algorithm that eliminates cuts from proofs in $\calc$.}

\item[$(5)$] $\phi$ is valid \iffi $\sar x : \phi$ is provable in $\treegl$.

\end{description}
\end{theorem}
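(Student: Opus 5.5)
Since the theorem is attributed to Poggiolesi and to Maniwa and Kashima, the plan is to transfer their results to the labeled notation rather than reprove them from scratch. The translation between tree-hypersequents and tree sequents preserves derivations rule-by-rule, with one exception: our $\boxl$ replaces Poggiolesi's $\boxl'$. By Remark~\ref{rmk:box-left-premise}, the left premise of $\boxl'$ is provable in $\calc$, and conversely $\boxl$ is obtained from $\boxl'$ together with that provable premise. Hence the two systems derive the same sequents, and each of items (1)--(5) transfers once checked in one system. Nevertheless, I would verify each item directly in the labeled setting.

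\textbf{Item (1).} Induct on $|\phi|$.
\begin{itemize}
\item Atoms are handled by $\idi$, boxed formulae by $\idii$, and $\bot$ on the left by $\botl$.
\item For $\phi\imp\psi$, apply $\impr$ then $\impl$, and close both premises by the induction hypothesis.
\end{itemize}

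\textbf{Item (3).} Height-preserving weakening is proved by induction on height, permuting the weakening upward. In the case of $\boxr$, one first renames the fresh label so that it avoids the labels of $\tel',\Gamma',\Delta'$. Height-preserving label substitution for fresh labels is proved beforehand by the same kind of induction.

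\textbf{Item (2).} Invertibility also goes by induction on height, using (3).
\begin{itemize}
\item For $\fourru$ and $\boxl$, the premise is a weakening of the conclusion, so invertibility follows from (3) directly.
\item For the propositional rules, the standard argument applies.
\item For $\boxr$ we need to show that if $\tel,\Gamma\sar x:\Box\phi,\Delta$ is provable, then so is $\tel,xRy,\Gamma,y:\Box\phi\sar y:\phi,\Delta$. Induct on the proof. If $x:\Box\phi$ was principal in $\boxr$, rename the fresh label to $y$. If the sequent was initial via $\idii$ with $x:\Box\phi$ on the left, then $\Box\phi$ must be derived on the left at $y$ from $x:\Box\phi$. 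I would use $\fourru$ to place $y:\Box\phi$ and then close the remaining sequent by item (1); the diagonal formula $y:\Box\phi$ already sits in the antecedent. I expect this case to need care.
\end{itemize}

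\textbf{Item (5).} Soundness follows from rule-wise soundness, argued as in the text for $\fourru$; the delicate case is $\boxr$. Suppose the conclusion fails at $\assign(x)$ with $\M,\assign(x)\not\models\Box\phi$. By converse well-foundedness, choose an $\R$-maximal successor $w$ of $\assign(x)$ among those falsifying $\phi$. Then $\M,w\models\Box\phi$ by transitivity and maximality, and setting $\assign(y)=w$ falsifies the premise. Completeness follows either via Hilbert-axiom derivability plus cut admissibility (4), which needs a derivation of L\"ob's axiom as in Example~\ref{ex:lob-proof}, or via countermodel extraction from failed proof search.

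\textbf{Item (4).} The main obstacle is cut admissibility, which is genuinely hard for $\logicgl$ because of the diagonal formula. Here I would rely on the corrected cut-elimination of Maniwa and Kashima~\cite{ManKas24}, transferred through the notational translation, rather than reproving it. Failing that, item (5) would be obtained semantically, which yields cut admissibility as a corollary of soundness and cut-free completeness.
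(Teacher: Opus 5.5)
The paper gives no proof of this theorem. It imports the result from Poggiolesi and Maniwa--Kashima through the notational correspondence (plus Remark~\ref{rmk:box-left-premise}), so your opening transfer paragraph is the paper's route. The genuine gap is in the direct check of invertibility of $\boxr$ that you then propose.

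Since sequents are sets, an instance of $\boxr$ may have $x:\Box\phi\in\Delta$, i.e.\ the premise may retain the principal formula. Invertibility must therefore deliver the strong inverse: from $\tel,\Gamma\sar x:\Box\phi,\Delta$ with $x:\Box\phi\notin\Delta$, obtain $\tel,xRy,\Gamma,y:\Box\phi\sar y:\phi,\Delta$. Keeping $x:\Box\phi$ on the right would be plain weakening, so dropping it is the whole content. Suppose the given proof ends with a retaining instance of $\boxr$ on $x:\Box\phi$. Renaming its fresh label only yields $\tel,xRy,\Gamma,y:\Box\phi\sar y:\phi,x:\Box\phi,\Delta$. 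Eliminating $x:\Box\phi$ by the induction hypothesis then creates a second fresh child $w$ of $x$, with $w:\Box\phi$ on the left and $w:\phi$ on the right, and nothing in your toolkit identifies $w$ with $y$. For the propositional rules the analogous case is harmless, because the duplicates produced by the induction hypothesis sit at the same label and collapse in a set. For $\boxr$ they sit at distinct labels. You need one of two repairs. The first is an admissible merge (label-contraction) rule, which is the kind of contraction result that the passage from Poggiolesi's multisets to the paper's sets also relies on. The second, since you import (4) anyway, is a cut. Weaken the given proof to $\tel,xRy,\Gamma,y:\Box\phi\sar x:\Box\phi,y:\phi,\Delta$. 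Prove $\tel,xRy,\Gamma,y:\Box\phi,x:\Box\phi\sar y:\phi,\Delta$ by $\boxl$ and item (1). Then cut on $x:\Box\phi$.

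Smaller points:
\begin{itemize}
\item In the $\idii$ case you flagged as delicate, $\fourru$ does nothing useful, because $y:\Box\phi$ is already present. You need $\boxl$ to put $y:\phi$ on the left and then item (1); with that change the case is trivial.
\item Equality of provable sequents transfers (1), (2), (4) and (5) between the two systems, but not the height-preservation in (3).
\item Simulating $\boxl$ in Poggiolesi's system requires the left premise of $\boxl'$ to be provable \emph{there}. It is: apply $\fourru$, then an identity.
\item Your semantic fallback for (4) needs cut-free completeness for arbitrary tree sequents, not only for $\sar x:\phi$ as in (5). The paper's search cannot supply this, because it unpacks boxes only at leaves; for example, $\prove$ returns $\false$ on the valid sequent $xRy\sar x:\Box(p\imp p)$. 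You would need a search that applies $\boxr$ at every node.
\end{itemize}
Your soundness argument for $\boxr$ via an $\R$-maximal falsifying successor is correct.
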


\iffalse
A useful feature of tree sequents is that they admit interpretation as a formula, meaning we can `lift' the semantics from $\logicgl$ formulae to sequents. Before defining the formula interpretation of tree sequents, let us define a useful notion: if $\tel$ is a tree such that $x \in \lab(\tel)$, then we let $\tel_{x}$ be the \emph{sub-tree} of $\tel$ rooted at $x$; we note that if $x$ is a leaf, then $\tel_{x} = \emptyset$. Given a tree sequent $\tel, \Gamma \sar \Delta$ with root $x \in \lab(\tel, \Gamma, \Delta)$ we define its formula interpretation $\formint_{x}(\tel, \Gamma \sar \Delta)$ recursively on the structure of $\tel$ as follows: 
\begin{flushleft}
$\bullet$ $\formint_{x}(\Gamma \sar \Delta) := \displaystyle{\bigwedge \Gamma(x) \rightarrow \bigvee \Delta(x)}$\\
$\bullet$ $\formint_{x}(\tel, xRy_{1}, \ldots, xRy_{n}, \Gamma \sar \Delta) := \displaystyle{\bigwege \Gamma(x) \rightarrow \bigvee \Delta(x) \lor \!\! \bigvee_{i \in [n]} \Box \formint_{y}(\tel_{y_{i}}, \Gamma \sar \Delta) }$
\end{flushleft}
A tree sequent $\tel, \Gamma \sar \Delta$ with root $x$ is \emph{(in)valid} \iffi $\formint_{x}(\tel, \Gamma \sar \Delta)$ is (in)valid. 
\fi



\section{Complexity-Optimal Proof-Search}\label{sec:proof-search}



We provide an answer to a question posed by Poggiolesi~\cite[p.~610]{Pog09b} concerning a syntactic decision procedure for $\logicgl$ using $\calc$. More
precisely, we present a proof-search algorithm that decides the
(in)validity of formulae in $\pspace$, thereby matching the known $\pspace$-completeness of $\logicgl$. Since $\treegl$ and
$\mathsf{G3KGL}$ are $\ptime$-equivalent~\cite{GorRam12,LyoOst24}, our
method is adaptable to labeled sequent-based proof-search as well, and thus resolves the open problem of Maggesi and Perini
Brogi~\cite[Section~7]{MagPer23}.

Several insights emerge from the design and analysis of this algorithm. First, in order to ensure the $\pspace$ upper-bound, we develop a `depth-first' proof-search procedure that generates only a single branch of a derivation at a time, as well as a single branch of the underlying tree sequents. While this strategy significantly reduces space consumption, it complicates the proof of correctness. In particular, demonstrating the existence of a counter-model when proof-search fails becomes difficult because the algorithm only produces partial fragments of a counter-model in each branch. We show how these fragments can nevertheless be combined to yield a genuine counter-model for the input sequent when proof-search fails.

Second, we observe that proof-search in $\calc$ terminates \emph{automatically}. In particular, the diagonal formula occurring in the $\boxr$ rule provides a natural bound on the depth of tree sequents generated during proof-search, eliminating the need for loop checking, despite the presence of transitivity, viz., the $\fourru$ rule. A similar observation was made for proof-search with Gentzen sequent calculi~\cite{SamVal80} and full labeled sequent calculi~\cite{Neg14}.

Finally, when the algorithm succeeds and produces a proof, each tree sequent appearing in the derivation is in fact a line sequent (cf.~\cite{LelPim15}), as a direct consequence of the depth-first nature of the search. We argue that the resulting derivations can be viewed as linear nested sequent (LNS) proofs, which are variants of those in the recently introduced LNS calculus $\mathsf{LNGL}$~\cite{Lyo25}. This 
demonstrates that LNS systems can be extracted from depth-first proof-search procedures for tree (i.e., nested) sequent calculi.

We now turn to the description of our proof-search algorithm. To detect termination, we introduce two syntactic properties of tree sequents. Intuitively, a tree sequent is said to be \emph{saturated} if $\boxr$ is the only rule that can be applied non-redundantly (to a leaf), whereas a tree sequent is \emph{stable} if no rule is non-redundantly applicable (to a leaf). Throughout this section, when we say that a rule is \emph{applicable}, we 
mean bottom-up applicable, unless stated otherwise. 

\begin{definition}[Saturated, Stable]\label{def:saturatedsequent}
Let $T =  $\treesequent be a tree sequent. We define $T$ to be \emph{saturated} \iffi $T$ satisfies the following \emph{saturation conditions}:
\begin{description}

\item[$(\id)$] if $x : p \in \Gamma$ or $x : \Box \phi \in \Gamma$, then $x : p \notin \Delta$ or $x : \Box \phi \notin \Delta$, respectively;

\item [$(\botrule)$] $x:\bot \notin \Gamma$;

\item [$(\impleft)$] if $x: \phi\rightarrow\psi \in \Gamma$, then either $x:\psi \in \Gamma$ or $x:\phi\in\Delta$;

\item [$(\impright)$] if $x: \phi\rightarrow\psi \in \Delta$, then $x:\phi \in \Gamma$ and $x:\psi\in\Delta$;

\item [$\boxlc$] if $xRy \in \tel$ and $x:\Box\phi\in\Gamma$, then $y: \Box\phi, y: \phi\in\Gamma$;

\end{description}
A tree sequent $\tseq = \tel, \Gamma \sar \Delta$ is \emph{stable} \iffi (1) it is saturated and (2) $\set{x : \Box \phi \in \Delta \mid x \text{ is a leaf in } \tseq} = \emptyset$. 
\end{definition}

\begin{remark}\label{rem:pspace-intuition} Since $\calc$ enjoys the subformula property, only subformulae of the input $\phi$ can appear during proof-search. For flat (i.e., traditional Gentzen) sequent systems, this makes $\pspace$ proof-search straightforward. By contrast, for expressive sequent formalisms (e.g., tree or labeled sequents), the situation is more delicate: the subformula property bounds 
the formulae occurring at each label, but not the underlying tree structure, which can in principle grow exponentially (e.g., see the $\expspace$ procedure discussed in~\cite{MagPer23}). Achieving $\pspace$ therefore requires two additional ingredients: (1)~a polynomial bound on the depth of tree sequents generated during proof-search and (2)~a method for avoiding the simultaneous construction of all branches of a tree sequent, which our linearization technique provides.
\end{remark}

Our proof-search algorithm $\prove$ is presented as Algorithm 1. 
We say that \emph{proof-search succeeds} when it outputs $\true$, and we say that \emph{proof-search fails} when it outputs $\false$. Lines 1–2 implement the $\idi$, $\idii$, and $\botl$ rules. Lines 3–4 test whether the current sequent is stable, meaning, any further rule application would be unnecessary, and the algorithm can safely halt. Lines 5-7, 8-10, and 11-13 respectively encode $\impl$, $\impr$, and a simultaneous application of $\boxl$ and $\fourru$. 

For the binary rule $\impl$, the algorithm may perform two recursive calls, as specified by the following expression, where the symbol $\&\&$ denotes conjunction: $\prove(\tel,\Gamma^\prime \sar \Delta) \ \&\& \ \prove(\tel,\Gamma\sar \Delta^\prime )$. We assume that the call $\prove(\tel,\Gamma^\prime \sar \Delta)$ is executed first, and that the second call $\prove(\tel,\Gamma \sar \Delta^\prime)$ is invoked only if the former returns $\true$. In particular, if $\prove(\tel,\Gamma^\prime \sar \Delta)=\false$, the conjunction immediately evaluates to $\false$ and the second call is skipped; if it returns $\true$, the second call is executed to determine the value of the conjunction. This evaluation strategy ensures that the algorithm generates at most one branch of a derivation at any given time and avoids unnecessary work. Moreover, this corresponds to \emph{conjunctive branching}: 
a proof is found only if both recursive calls succeed. 

One interesting aspect of our algorithm concerns lines 14-19, which encode multiple, simultaneous applications of the $\boxr$ rules. The algorithm may invoke up to $n$ many recursive calls via the line:
$$
\prove(\tel_{1},\Gamma_{1} \sar \Delta_{1}) \ \| \ \cdots \ \| \ \prove(\tel_{n}, \Gamma_{n} \sar \Delta_{n}).
$$
Here the symbol $\|$ stands for disjunction. We assume that $\prove(\tel_{1},\Gamma_{1} \sar \Delta_{1})$ is executed first, and depending on the output, $\prove(\tel_{2},\Gamma_{2} \sar \Delta_{2})$ may be executed second, and so on. Each successive call $\prove(\tel_{i+1},\Gamma_{i+1} \sar \Delta_{i+1})$ is only executed if the former call $\prove(\tel_{i},\Gamma_{i} \sar \Delta_{i})$ returns $\false$. In particular, if $\prove(\tel_{i}, \Gamma_{i} \sar \Delta_{i})=\true$, the disjunction immediately evaluates to $\true$ and all remaining calls are skipped; if it returns $\false$, the next call is executed to determine the value of the disjunction. This evaluation strategy ensures that the algorithm generates at most one branch of a tree sequent at any given time. This contrasts with conjunctive branching, which restricts exploration of proof-search to a single branch of a derivation at a time.

Therefore, line 19 corresponds to \emph{disjunctive branching} in the procedure: a proof is found if at least one recursive call succeeds (cf.~\cite{Min00}). One can view this operation in $\prove$ as a rule application, that is, one can view lines 14-19 as encoding a bottom-up application of the $\dbr$ rule, defined below:

\begin{center}
\AxiomC{$\{\tel, xRy_{i}, \Gamma, y_{i} : \Box \phi_i \sar y_{i} : \phi_i, \Sigma_{\Box\phi_i}, \Delta \mid i \in [n]$\}}
\RightLabel{$\ndrule^{\dag}$}
\UnaryInfC{$\tel, \Gamma \sar x : \Box \phi_1,\ldots, x : \Box \phi_{n}, \Delta$}
\DisplayProof
\end{center}

\noindent
The $\ndrule$ rule is subject to a side condition $\dag$ stipulating that the rule can be applied only if (1) $x$ is a leaf in $\tel$, (2) each $y_{i}$ is fresh, and (3) $\Delta \cap \set{x : \Box \phi \mid \phi \in \lang} = \emptyset$. Also, we define the set $\Sigma_{\Box\phi_i}$ appearing in each premise of $\ndrule$ for $i \in [n]$ accordingly: $\Sigma_{\Box\phi_i} := \set{x : \Box \phi_1,\ldots, x : \Box \phi_n} \setminus \set{x : \Box \phi_{i}}$.

\begin{algorithm}[t]\label{alg:Proof-search}
\KwIn{A tree sequent $\tseq = \tel, \Gamma \sar \Delta$}
\KwOut{A Boolean: $\true$, $\false$}
\If{$x: p \in \Gamma\cap\Delta$, $x: \Box \phi \in \Gamma\cap\Delta$, or $x: \bot \in \Gamma$}
{
\Return $\true$\tcp*{Applying $\idi$, $\idii$, or $\botl$}
}
\If{$\tseq = \tel, \Gamma \sar \Delta$ is stable}
{
\Return $\false$\tcp*{No rule applicable}
}
\If{$x: \phi \rightarrow \psi \in \Gamma, x: \psi \not\in \Gamma, x: \phi \not\in \Delta$}
    {
    Set $\Gamma^\prime := \Gamma, x: \psi$ and $\Delta^\prime := \Delta, x: \phi$;\\
    \Return $\prove(\tel,\Gamma^\prime \sar \Delta)$ \&\& $\prove(\tel,\Gamma\sar \Delta^\prime )$\tcp*{Applying $\impl$}
    }
\If{$x: \phi \rightarrow \psi \in \Delta$ and either $x: \phi \not\in \Gamma$ or $x: \psi \not\in \Delta$}
{
Set $\Gamma^\prime := \Gamma, x: \phi$ and $\Delta^\prime := \Delta, x: \psi$;\\
\Return $\prove(\tel,\Gamma^\prime \sar \Delta^\prime)$\tcp*{Applying $\impr$}
}
\If{$x: \Box \phi \in \Gamma$, $xRy \in \tel$, 
but $y: \Box \phi, y: \phi \not\in \Gamma$}
{
Set $\Gamma^\prime := \Gamma, y: \Box\phi, y: \phi$;\\
\Return $\prove(\tel,\Gamma^\prime \sar \Delta)$\tcp*{Applying $\boxl$ and $\fourru$ simultaneously}
}
\If{\treesequent is saturated}
{
Let 
$\set{x: \Box \phi \in \Delta \mid \forall y \in \lab(\tseq), xRy \not\in \tel} = \set{x : \Box \phi_{1}, \ldots, x : \Box \phi_{n}}$;\\
Let $y_{1}, \ldots, y_{n} \in \lab \setminus \lab(\tel, \Gamma \sar\Delta)$;\\
Set $\tel_{i} := \tel, xRy_{i}$;\\
Set $\Gamma_{i} := \Gamma, y_{i}: \Box\phi_{i}$ and $\Delta_{i} := \Delta, y_{i}: \phi_{i}$, for $i \in [n]$;\\
\Return $\prove(\tel_{1},\Gamma_{1} \sar \Delta_{1}) \ \| \ \cdots \ \| \ \prove(\tel_{n}, \Gamma_{n} \sar \Delta_{n})$\tcp*{Applying $\ndrule$}
}
\caption{$\prove$}
\end{algorithm}

Based on the above description of $\prove$, one can see that the algorithm operates between two phases: in one phase of the algorithm, it attempts to generate a saturated sequent. Once such a sequent is generated, it is checked to see if it is stable. If so, the algorithm may halt; however, if a box formula occurs at a leaf, then this signals that the $\boxr$ rule (and thus, the $\ndrule$ rule) can be applied bottom-up.

\begin{example} To demonstrate the functionality of $\ndrule$, we give an example application with principal formulae $y : \Box C$ and $y : \Box D$:
\begin{center}
\AxiomC{$xRy, yRz_{1}, x : A, z_{1} : \Box C \sar z_{1} : C, x : \Box B, y : \Box D
\quad
xRy, yRz_{2}, x : A, z_{2} : \Box D \sar z_{2} : D, x : \Box B, y : \Box C$}
\RightLabel{$\dbr$}
\UnaryInfC{$xRy, x : A \sar x : \Box B, y : \Box C, y : \Box D$}
\DisplayProof
\end{center}
\end{example}

Since lines 14–19 correspond to applications of the $\ndrule$ rule, we may regard the structure generated by a run of $\prove$ as a kind of derivation that employs rules in $(\calc \setminus \set{\boxr}) \cup \set{\dbr}$. This perspective motivates the definition of a \emph{computation tree}, a structure that plays a crucial role in extracting proofs and defining counter-models, used to establish the correctness of terminating proof-search.

\begin{definition}[Computation Tree] A \emph{computation tree} is a tuple $\ct := (\ctv,\cte,\ctl)$ such that $\ctv$ is a non-empty set of tree sequents, $\cte \subseteq \ctv \times \ctv$, and $\ctl : \ctv \to \set{\tval,\fval}$, which satisfies the following condition: each parent node $\tseq \in \ctv$ is the conclusion of a rule in $(\calc \setminus \set{\boxr}) \cup \set{\ndrule}$ with its children the corresponding premises. 
\end{definition}

We note that a computation tree is essentially a derivation in the `calculus' $(\calc \setminus \set{\boxr}) \cup \set{\ndrule}$. 
As discussed above, $\prove(\sar x : \phi)$ builds one branch of a computation tree at a time, in a depth-first manner, during its execution. Still, to establish the correctness of our algorithm (see Theorems~\ref{lem:true-gives-proof} and~\ref{lem:false-gives-refute}), it will be helpful to have the \emph{entire} computation tree traced by $\prove(\sar x : \phi)$, that is, the entire tree structure and all branches explored by a terminating execution of the algorithm. We therefore define $\ct(x : \phi) := (\ctv,\cte,\ctl)$ to be the computation tree built by $\prove(\sar x : \phi)$ during its execution such that for all $\tseq \in \ctv$, (1) $\ctl(\tseq) = \tval$ \iffi $\prove(\tseq) = \true$ and (2) $\ctl(\tseq) = \fval$ \iffi $\prove(\tseq) = \false$.\footnote{The formal definition of $\ct(x : \phi)$ can be found in the appendix.} 

\begin{figure}

\begin{center}
\begin{tabular}{c @{\hskip .5em} c}
$\ct =$

&

\AxiomC{$\true$}
\RightLabel{$\idi$}
\UnaryInfC{$xRz, zRw, z: \psi, w : \Box (p \rightarrow p), w : p \sar w : p, z : \Box q, x : \phi$}
\RightLabel{$\impr$}
\UnaryInfC{$xRz, zRw, z: \psi, w : \Box (p \rightarrow p) \sar w : p \imp p, z : \Box q, x : \phi$}

\AxiomC{$\tseq$}

\RightLabel{$\ndrule$}
\BinaryInfC{$xRz, z: \psi \vdash z: \Box (p \rightarrow p), z : \Box q, x : \phi$}
\DisplayProof
\end{tabular}
\end{center}







\smallskip

\begin{center}
\AxiomC{$\false$}
\UnaryInfC{$xRy, y:\phi, y: p \vdash y:q, x: \psi$}
\RightLabel{$\impright$}
\UnaryInfC{$xRy, y:\phi \vdash y: p \rightarrow q, x: \psi$}

\AxiomC{$\true$}
\RightLabel{$\botl$}
\UnaryInfC{$xRz, z: \psi, z: \bot \sar z : \Box q, x : \phi$}

\AxiomC{$\ct$}

\RightLabel{$\impl$}
\BinaryInfC{$xRz, z: \psi, z: \neg \Box (p \rightarrow p ) \sar z : \Box q, x : \phi$}
\RightLabel{$\impr$}
\UnaryInfC{$xRz, z: \psi \vdash z: \neg \Box (p \rightarrow p ) \rightarrow \Box q, x : \phi$}

\RightLabel{$\ndrule$}
\BinaryInfC{$\sar x: \Box (p \rightarrow q), x: \Box( \neg \Box (p \rightarrow p ) \rightarrow \Box q)$}
\DisplayProof
\end{center}

\caption{Computation tree and successful proof-search example.\label{fig:proof-from-comp-tree}}
\end{figure}

\begin{example} We first provide an example of how a proof can be extracted from a computation tree if proof-search succeeds. To improve readability, we let $\phi = \Box (p \rightarrow q)$ and $\psi = \Box( \neg \Box (p \rightarrow p ) \rightarrow \Box q)$. The computation tree corresponding to $\prove( \sar x : \phi, x : \psi)$ is displayed in \fig~\ref{fig:proof-from-comp-tree}, where we let $\tseq = xRz, zRu, z: \psi, u : \Box q \vdash u : q, z : \Box (p \rightarrow p), x : \phi$. We have also indicated which branch outputs $\true$ or $\false$ by placing the output at the top of the branch; one can determine the labels $\true$ or $\false$ of each tree sequent in the computation tree based on this information. The algorithm will generate one branch of the computation tree at a time, working its way from left to right.

\begin{figure}
    \centering

\begin{center}
\AxiomC{ }
\RightLabel{$\botl$}
\UnaryInfC{$xRz, z: \psi, z: \bot \sar z : \Box q, x : \phi$}

\AxiomC{ }
\RightLabel{$\idi$}
\UnaryInfC{$xRz, zRw, z: \psi, z: \Box (p \rightarrow p), w : p \sar w : p, z : \Box q, x : \phi$}
\RightLabel{$\impr$}
\UnaryInfC{$xRz, zRw, z: \psi, z: \Box (p \rightarrow p) \sar w : p \imp p, z : \Box q, x : \phi$}
\RightLabel{$\boxr$}
\UnaryInfC{$xRz, z: \psi \vdash z: \Box (p \rightarrow p), z : \Box q, x : \phi$}

\RightLabel{$\impl$}
\BinaryInfC{$xRz, z: \psi, z: \neg \Box (p \rightarrow p ) \sar z : \Box q, x : \phi$}
\RightLabel{$\impr$}
\UnaryInfC{$xRz, z: \psi \vdash z: \neg \Box (p \rightarrow p ) \rightarrow \Box q, x : \phi$}

\RightLabel{$\boxr$}
\UnaryInfC{$\sar x: \Box (p \rightarrow q), x: \Box( \neg \Box (p \rightarrow p ) \rightarrow \Box q)$}
\DisplayProof
\end{center}

\caption{Example of extracted proof from computation tree.\label{fig:eg-extract-proof}}
\end{figure}

Observe that the proof in Figure~\ref{fig:eg-extract-proof} can be extracted from the computation tree by `pruning' each $\ndrule$ application and only preserving a `successful' branch. By selecting a single premise of $\ndrule$ that outputs $\true$, each $\ndrule$ application is transformed into a $\boxr$ application, yielding a proof in $\calc$. We remark that since the first premise of $\ndrule$ in the top computation tree $\ct$ outputs $\true$, $\prove$ will not recursively call the second premise $\tseq$. 
Also, note that $\impl$ is applied to $\neg \Box (p \rightarrow p ) := \Box (p \rightarrow p ) \imp \bot$. 

Last, we make two important observations: first, all tree sequents are \emph{line sequents} in the computation tree, arising from the fact that each premise of $\ndrule$ `unpacks' a single box formula at a leaf. Second, every bottom-up application of a rule is \emph{end-active} (cf.~\cite{Lel15}), that is, the auxiliary labeled formulae occur at leaves in all line sequents.
\end{example}

Before turning to the general procedure for extracting proofs from successful computation trees, we examine the structure of the computation trees generated by $\prove$. In the sequel, we assume that $\prove$ is invoked on an initial input of the form $\sar x : \phi$. Since every tree sequent admits a $\ptime$-computable formula interpretation of linear size (cf.~\cite{Pog09b}), this assumption is made without a loss of generality.

Starting from such an input, $\prove$ exhaustively applies the rules $\impl$ and $\impr$ (viz. lines 5-10) with auxiliary formulae at the label $x$, until an initial sequent, a stable sequent, or a saturated but non-stable sequent is reached. In the latter case, the rule $\ndrule$ is applied, introducing a relational atom $xRy$ along the current branch with $y$ fresh. The algorithm then continues by exhaustively applying the rules $\impl$, $\impr$, $\boxl$, and $\fourru$ (viz. lines 5-13) with auxiliary formulae at the new label $y$, again until an initial, stable, or saturated non-stable sequent is encountered. If the latter occurs, $\ndrule$ is applied once more, introducing a further relational atom $yRz$ along the current branch with $z$ fresh, and the process repeats.

This iterative pattern shows that, along any branch of a computation tree produced by $\prove$, the sequents encountered are always line sequents. Moreover, all auxiliary formulae introduced during proof-search occur at leaf nodes.

\begin{definition}[Line-Like, End-Active] We define a derivation or computation tree to be \emph{line-like} \iffi it consists solely of line sequents. We define a derivation or computation tree to be \emph{end-active} \iffi all rule applications have auxiliary formulae at leaf nodes and the principal formulae in $\idi$, $\idii$, and $\botl$ applications occur at leaf nodes.
\end{definition}



One can extract line-like, end-active proofs from successful
proof-search by pruning computation trees as follows: starting at the root, retain all nodes labeled with $\true$; for $\impl$ applications, keep both premises (which must both be labeled $\true$ for the conclusion to be labeled $\true$); and for $\dbr$ applications, retain only a single premise labeled $\true$, thereby transforming each $\dbr$ application into a $\boxr$ application. The result is a proof in $\calc$ since all leaves are initial sequents (as they are labeled $\true$ and triggered lines~1-2 of the algorithm). The proof is line-like and end-active because each premise
of $\dbr$ introduces a single relational atom at a leaf, preserving the line structure, and all auxiliary formulae are introduced at leaf labels throughout the computation. Therefore, the following theorem holds:\footnote{The formal proof can be found in the appendix.} 

\begin{theorem}\label{lem:true-gives-proof}
If $\prove( \sar x : \phi) = \true$, then $\sar x : \phi$ has a line-like, end-active proof in $\calc$, that is, the input $\phi$ is valid.
\end{theorem}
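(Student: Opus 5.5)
We prove a stronger, local statement by induction and then specialise it to the input $\sar x : \phi$: for every node $\tseq$ of the computation tree $\ct(x : \phi)$ with $\ctl(\tseq) = \tval$, the subtree rooted at $\tseq$ can be pruned to a line-like, end-active proof of $\tseq$ in $\calc$. The induction is on the height of the subtree of $\ct(x:\phi)$ rooted at $\tseq$. This height is finite because $\prove(\sar x : \phi)$ returned $\true$, so the execution terminated and the traced computation tree is finite. Throughout we use only the labelling convention $\ctl(\tseq) = \tval$ iff $\prove(\tseq) = \true$, together with the case analysis of Algorithm~1.

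For the case split, we ask which line of $\prove$ produced the output $\true$ on $\tseq$. Lines 3--4 are excluded, since they return $\false$.
\begin{itemize}
\item \emph{Lines 1--2.} Then $\tseq$ is an instance of $\idi$, $\idii$ or $\botl$, and the one-node tree is a proof.
\item \emph{Lines 5--7.} Both recursive calls were executed and both returned $\true$, because of the short-circuit semantics of $\&\&$. The induction hypothesis gives proofs of both premises, and we combine them with $\impl$.
\item \emph{Lines 8--10.} We apply the induction hypothesis to the single premise and then $\impr$.
\item \emph{Lines 11--13.} The induction hypothesis gives a proof of $\tel, \Gamma, y:\Box\phi, y:\phi \sar \Delta$. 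One application of $\boxl$ removes $y:\phi$ and a further application of $\fourru$ removes $y:\Box\phi$, yielding $\tel,\Gamma\sar\Delta$ because $x:\Box\phi$ and $xRy$ remain present. Since sequents are sets, the case where one of the two formulae was already present is handled by omitting the corresponding step.
\item \emph{Lines 14--19.} Some disjunct $\prove(\tel_i,\Gamma_i\sar\Delta_i)$ returned $\true$. Its premise $\tel, xRy_i, \Gamma, y_i:\Box\phi_i \sar y_i:\phi_i, \Delta$ is exactly a $\boxr$ premise for principal formula $x:\Box\phi_i$, where $x:\Box\phi_i \in \Delta$ and $y_i$ is fresh. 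We apply the induction hypothesis and then $\boxr$, discarding the other premises of $\ndrule$.
\end{itemize}

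Next we check that the pruned proof is line-like and end-active, again by induction along the run; this is the part needing most care. The key invariant is that every sequent in $\ct(\sar x:\phi)$ is a line sequent. Moreover, whenever the line has leaf $\ell$, every rule applied through lines 5--13 has auxiliary formulae at $\ell$, and every rule applied through lines 14--19 has $x=\ell$.

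The main obstacle is that the guards in lines 5--13 do not syntactically mention leaves. We must therefore show that, upon creation of a new leaf $y_i$ by $\ndrule$, every label strictly above $y_i$ already satisfies the saturation conditions. This holds because line 14 fires only on saturated sequents. After adding $y_i$, the $(\impleft)$, $(\impright)$ and $\boxlc$ conditions at old labels continue to hold, since we only add formulae at $y_i$ and the edge $xRy_i$. The sole exception is $\boxlc$ for the new edge $xRy_i$, and it concerns formulae placed at $y_i$, so it is end-active. Transitivity-style propagation from ancestors of $x$ to $y_i$ is not needed, because $\boxlc$ at $x$ already copies $\Box$-formulae from ancestors into $x$.

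Hence every subsequent rule application, including initial sequents detected at the leaf, is at the leaf. The one exception is an initial sequent whose principal formulae lie at an inner label; this cannot arise, since saturation condition $(\id)$ held when that label was a leaf. Finally, condition (1) of $\ndrule$ and the single fresh atom $xRy_i$ preserve the line shape.

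Validity of $\phi$ then follows from Theorem~\ref{thm:tree-seq-properties}(5).
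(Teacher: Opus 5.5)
Your proposal is correct and follows essentially the same route as the paper. Both prune the computation tree to its $\true$-labelled nodes, keep both premises of each $\impl$ application, and keep a single successful premise of each $\dbr$ application, which thereby becomes a $\boxr$ application; your induction on subtree height is the bottom-up phrasing of the paper's root-first pruning. Your explicit invariant for line-likeness and end-activeness, and your splitting of lines 11--13 into a $\boxl$ step followed by a $\fourru$ step, make precise two points the paper only asserts.
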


The above theorem confirms that when proof-search succeeds a proof of the input exists. However, we still need to confirm that when proof-search fails a counter-model of the input exists. Let us first provide an example showing how a counter-model can be extracted via failed proof-search.

\begin{figure}

\begin{center}
\resizebox{\textwidth}{!}{
\AxiomC{$\true$}
\RightLabel{$\botl$}
\UnaryInfC{$x : \bot \sar x: \Box((p \rightarrow q)\rightarrow q)$}

\AxiomC{$\false$}
\UnaryInfC{$xRy, y:\phi, y : p \vdash y: \bot, x: \psi$}
\RightLabel{$\impright$}
\UnaryInfC{$xRy, y:\phi \vdash y: \neg p, x: \psi$}

\AxiomC{$\true$}
\RightLabel{$\idi$}
\UnaryInfC{$\tseq_{1}$}

\AxiomC{$\false$}
\UnaryInfC{$\tseq_{2}$}
\RightLabel{$\impl$}
\BinaryInfC{$xRz, z:\psi, z: p \rightarrow q \vdash x: \phi, z : q$}
\RightLabel{$\impr$}
\UnaryInfC{$xRz, z:\psi \vdash x: \phi, z: (p \rightarrow q) \rightarrow q$}

\RightLabel{$\ndrule$}
\BinaryInfC{$\vdash x: \Box \neg p, x: \Box((p \rightarrow q)\rightarrow q)$}

\RightLabel{$\impl$}
\BinaryInfC{$x: \neg\Box \neg p  \vdash x: \Box((p \rightarrow q)\rightarrow q)$}
\RightLabel{$\impr$}
\UnaryInfC{$\vdash x: \neg\Box \neg p \rightarrow ( \Box((p \rightarrow q)\rightarrow q))$}
\DisplayProof
}
\end{center}

\caption{Failed proof-search and counter-model extraction example.\label{fig:failed-proof-search-counter-model}}
\end{figure}

\begin{example} In \fig~\ref{fig:failed-proof-search-counter-model}, we give an example of a computation tree for failed proof-search. To improve readability, we use the abbreviations: $\phi := \Box \neg p$, $\psi := \Box((p \rightarrow q)\rightarrow q)$, $\tseq_{1} := xRz, z:\psi, z: q \vdash x: \phi, z : q$, and $\tseq_{2} := xRz, z:\psi \vdash x: \phi, z : q, z : p$. To build a counter-model, we take the two stable line sequents that output $\false$ and perform a sequent composition to recover a tree sequent as shown below:
$$
\tseq = xRy, xRz, y:\phi, y : p, z:\psi \vdash y: \bot, x: \psi, x: \phi, z : q, z : p
$$
We define a model $\M = (\W,\R,\V)$ using $\tseq$ as follows: $\W := \lab(\tseq) = \set{x,y,z}$, $\R = \set{(x,y),(x,z)}$, and $V(p) = \set{y}$. In other words, the labels are the worlds in $\W$, the relational atoms define the accessibility relation $\R$, and propositional atoms are true at worlds \iffi they occur in the antecedent of $\tseq$ associated with that world. One can confirm that $\M, x \not\models \neg\Box \neg p \rightarrow ( \Box((p \rightarrow q)\rightarrow q))$.
\end{example}

\begin{theorem}\label{lem:false-gives-refute}
If $\prove( \sar x : \phi) = \false$, then a model $\M = (\W,\R,\V)$ can be extracted from the corresponding computation tree such that $\M \not\models \phi$.
\end{theorem}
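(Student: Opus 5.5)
The plan is to extract from the failed computation tree $\ct(x:\phi)$ a finite \emph{refutation subtree}, glue its stable leaves into one tree sequent $\tseq^{*}$ by sequent composition, and read off the model as in the example preceding the statement. I would rely on termination of $\prove$ (and hence finiteness of $\ct(x:\phi)$), which I take from the termination argument of this section.

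First, build the refutation subtree top-down from the root, which is labeled $\fval$.
\begin{itemize}
\item At a node produced by a unary rule ($\impr$, or the simultaneous $\boxl$/$\fourru$ step), keep its unique premise, which must also be labeled $\fval$.
\item At an $\impl$ node labeled $\fval$, keep one premise labeled $\fval$. This is the first premise if it is $\fval$, and otherwise the second, which was then evaluated and returned $\fval$.
\item At a $\dbr$ node labeled $\fval$, every disjunct was evaluated and returned $\fval$, so keep \emph{all} premises.
\end{itemize}
The leaves of this subtree are labeled $\fval$ and are not initial, so they triggered lines 3--4, i.e., they are stable line sequents. Since $\dbr$ always chooses fresh labels, I may assume (renaming if necessary) that distinct premises of distinct $\dbr$ applications use pairwise distinct labels. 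Then the leaves share labels exactly along common prefixes of their lines, and their composition $\tseq^{*}=\tel^{*},\Gamma^{*}\sar\Delta^{*}$ is a finite tree sequent rooted at $x$.

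The key structural lemma is that, for every label $w$, the sets $\Gamma^{*}(w)$ and $\Delta^{*}(w)$ coincide with $\Gamma(w)$ and $\Delta(w)$ in any single leaf containing $w$, so that $\tseq^{*}$ satisfies all saturation conditions of Definition~\ref{def:saturatedsequent}. I would prove this using end-activity. Once $\dbr$ has been applied at $w$, every later rule application has its auxiliary formulae at strict descendants of $w$. The sets $\Sigma_{\Box\phi_i}$ re-add only consequent formulae already present at $w$, and $\boxl$/$\fourru$ add formulae only at children. Consequently the content at a non-leaf label $w$ is frozen at the $\dbr$ node that expands $w$, and this node lies below every leaf containing $w$. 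A label that is a leaf of the tree occurs in only one refutation leaf. Since each refutation leaf is stable, the propositional conditions and $(\id)$, $(\botrule)$ transfer directly to $\tseq^{*}$. For $\boxlc$, note that any edge $wRv$ of $\tel^{*}$ lies in some single leaf, which is saturated.

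Define $\M$ by $\W:=\lab(\tseq^{*})$, let $\R$ be the transitive closure of the edges of $\tel^{*}$, and set $\V(p):=\set{w\mid p\in\Gamma^{*}(w)}$. Then $\R$ is transitive, and it is conversely well-founded because $\tel^{*}$ is a finite tree, so there are no cycles. I would then prove a truth lemma by induction on $\chi$: if $w:\chi\in\Gamma^{*}$ then $\M,w\models\chi$, and if $w:\chi\in\Delta^{*}$ then $\M,w\not\models\chi$.
\begin{itemize}
\item Atoms and $\bot$ follow from the definition of $\V$ together with $(\id)$ and $(\botrule)$.
\item Implications follow from the $\impleft$ and $\impright$ saturation conditions.
\item For $w:\Box\chi\in\Gamma^{*}$, iterating $\boxlc$ along a path $w=v_0Rv_1\cdots Rv_k$ of edges puts both $v_k:\Box\chi$ and $v_k:\chi$ into $\Gamma^{*}$, so the induction hypothesis applies to every $\R$-successor.
\item For $w:\Box\chi\in\Delta^{*}$, the label $w$ cannot be a leaf of its stable sequent, by condition (2) of stability. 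Hence $w$ was expanded by $\dbr$. Side condition (3) forces every boxed consequent formula at $w$ to be principal, and since all premises were kept, some child $y$ has $y:\chi\in\Delta^{*}$. By the induction hypothesis $\M,y\not\models\chi$, so $\M,w\not\models\Box\chi$.
\end{itemize}
Since $x:\phi\in\Delta^{*}$, we obtain $\M,x\not\models\phi$.

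The main obstacle is the content-stability lemma: one must check carefully that no rule ever adds formulae at an already-expanded label. This requires verifying that $\dbr$ is applied only when the sequent is saturated and the principal label is the current leaf, and that in $\fourru$/$\boxl$ the parent formula is already present. One must also ensure the freshness and renaming bookkeeping, so that the composition really is a tree. If this lemma fails in some corner case, the fallback is to take, for each label, the union over all leaves and re-verify saturation directly.
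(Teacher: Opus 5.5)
Your proposal is correct and follows essentially the same route as the paper: the same pruning of $\ct(x:\phi)$ (one $\fval$ premise at $\impl$, all premises at $\dbr$), the same composition $\tseq^{*}$ of the stable leaves, the same model with $\R$ the transitive closure of the tree edges, and the same mutual truth lemma. The differences are only organizational. You package the paper's closest-common-ancestor/end-activity argument (used for the atomic case) and its single-leaf-path argument (used for $\Box$ on the left) into one lemma that $\tseq^{*}$ is itself saturated. That lemma holds because lines 14--19 keep the whole consequent $\Delta$ in each premise, rather than because of the $\Sigma_{\Box\phi_i}$ formulation. You also make explicit the global freshness of labels that the paper uses tacitly.
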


\begin{proof} Suppose $\prove(\sar x : \phi) = \false$ and let $\ct(x : \phi) = (\ctv,\cte,\ctl)$ be the corresponding computation tree. We prune the computation tree to obtain a structure $(\ctv',\cte')$ from which we can extract a counter-model for $\phi$. Let us define $(\ctv',\cte')$ as follows:
\begin{description}

\item[$(1)$] Let $( \sar x : \phi) \in \ctv'$ and observe that $\ctl( \sar x : \phi) = \fval$ by assumption;

\item[$(2)$] If $\ctl(\tseq) = \fval$ and $\tseq \in \ctv'$ concludes a unary rule $ \ru \in \calc \setminus \set{\boxr}$ in $\ct(x : \phi)$ with $\tseq' \in \ctv$ the premise, then $\tseq' \in \ctv'$ and $(\tseq,\tseq') \in \cte'$;

\item[$(3)$] If $\ctl(\tseq) = \fval$ and $\tseq \in \ctv'$ is the conclusion of $\impl$ in $\ct(x : \phi)$ with $\tseq_{1}, \tseq_{2} \in \ctv$ the premises, then for some $i \in \set{1,2}$, $\ctl(\tseq_{i}) = \fval$, so for exactly one such $i$, we let $\tseq_{i} \in \ctv'$ and $(\tseq,\tseq_{i}) \in \cte'$;

\item[$(4)$] If $\ctl(\tseq) = \fval$ and $\tseq \in \ctv'$ concludes $\ndrule$ in $\ct(x : \phi)$ with $\tseq_{1}, \ldots, \tseq_{n} \in \ctv$ the premises, then we let $\tseq_{1}, \ldots, \tseq_{n} \in \ctv'$ and $(\tseq,\tseq_{1}), \ldots, (\tseq,\tseq_{n}) \in \cte'$.

\end{description}
Observe that the structure $(\ctv',\cte')$ is obtained by starting at the root and taking the downward closure of sequents labeled with $\false$, with the exception that only a single premise of an $\impl$ application is retained (i.e., one premise labeled with $\false$ is retained while the other is ignored, regardless of its label). Hence, any branching that occurs in $(\ctv',\cte')$ is due to a $\ndrule$ rule application.

Let $\tseq_{1}, \ldots, \tseq_{n} \in \ctv'$ be all stable leaves in the structure $(\ctv',\cte')$, and define $\tseq^{*} := \tseq_{1} \scomp \cdots \scomp \tseq_{n}$. We let $\tseq_{i} := \tel_{i}, \Gamma_{i} \sar \Delta_{i}$ for $i \in [n]$, $\tel := \bigcup_{i \in [n]} \tel_{i}$, $\Gamma := \bigcup_{i \in [n]} \Gamma_{i}$, and $\Delta := \bigcup_{i \in [n]} \Delta_{i}$, so that $\tel^{*} = \tel, \Gamma \sar \Delta$. Recall that $\tseq_{1}, \ldots, \tseq_{n}$ must be line sequents. We now define the model $\M = (\W,\R,\V)$ such that (1) $\W := \lab(\tseq^{*})$, (2) $(y,z) \in \R$ \iffi there exist $u_{1}, \ldots, u_{k} \in \lab(\tseq^{*})$ such that $yRu_{1}, \ldots, u_{k}Rz \in \tel$, and (3) $y \in \V(p)$ \iffi $y : p \in \Gamma$. We now prove that $\M$ is indeed a model. 

First, since $\sar x : \phi$ was the input to proof-search, we know that $x \in \W$, and so, $\W \neq \emptyset$. Second, by construction, we know that $\tel$ is a finite tree, meaning, $\R$ is a finite transitively-closed tree. Hence, $\R$ is both transitive and conversely well-founded. Last, observe that $\V$ is well-defined. 

To finish the proof, we prove the following two claims by a mutual induction on the length of $\phi$ and $\psi$, for all $y \in \lab(\tseq^{*})$: (i) if $y : \phi \in \Gamma$, then $\M, y \models \phi$ and (ii) if $y : \psi \in \Delta$, then $\M, y \not\models \psi$.
\begin{description}

\item[$\phi = p.$] If $y : p \in \Gamma$, then by the definition of $\V$, we know that $y \in \V(p)$, and so, $\M, y \models p$.

\item[$\psi = p.$] Let $y : p \in \Delta$. Assume for a contradiction that $y : p \in \Gamma$ as well. Then, it must be the case that for some $i \neq j \in [n]$, $y : p \in \Gamma_{i}$ and $y : p \in \Delta_{j}$. Observe that if $i = j$, then $\tseq_{i}$ would not be saturated, contradicting our assumption that $\tseq_{i}$ is stable. $\tseq_{i}$ and $\tseq_{j}$ must occur along different branches of $(\ctv',\cte')$ as they are distinct leaves. Let $\tseq$ be the closest common ancestor of $\tseq_{i}$ and $\tseq_{j}$. Since $(\ctv',\cte')$ is a tree, such an ancestor must exist, and by what was said above, it must be the conclusion of a $\ndrule$ application. All labels shared by $\tseq_{i}$ and $\tseq_{j}$ must occur in $\tseq$ by construction because after $\ndrule$ is applied, all labels introduced will be fresh and pairwise distinct; consequently, $y \in \lab(\tseq)$. By the definition of $\prove$, we know that all rules in $(\ctv',\cte')$ will be end-active, meaning, after $\ndrule$ is applied bottom-up to $\tseq$, $y : p$ cannot be introduced along the branch to $\tseq_{i}$ or $\tseq_{j}$. Therefore, $y : p$ must occur in the antecedent and consequent of $\tseq$, which contradicts our assumption that $\ndrule$ was applied bottom-up to $\tseq$; $\ndrule$ is only applied to saturated sequents and in this case $\tseq$ would not satisfy condition ($\id$). It follows that $y : p \not\in \Gamma$, meaning $y \not\in \V(p)$, and so, $\M, y \not\models p$.

\item[$\phi = \bot.$] Observe that $y : \bot \not\in \Gamma$ since then some $\tseq_{i}$ would not be saturated, contrary to our assumption. Hence, claim (i) vacuously holds.

\item[$\psi = \bot.$] If $y : \bot \in \Delta$, then claim (ii) vacuously holds because $\M, y \not\models \bot$ by definition.

\item[$\phi = \chi \imp \theta.$] If $y : \chi \imp \theta \in \Gamma$, then there exists some $\tseq_{i}$ with $y : \chi \imp \theta \in \Gamma_{i}$. Since $\tseq_{i}$ is saturated, we know that either $y : \chi \in \Delta_{i} \subseteq \Delta$ or $y : \theta \in \Gamma_{i} \subseteq \Gamma$. By IH, either $\M,y \not\models \chi$ or $\M, y \models \theta$. Either way, $\M, y \models \chi \imp \theta$.

\item[$\psi = \chi \imp \theta.$] If $y : \chi \imp \theta \in \Delta$, then there exists some $\tseq_{i}$ with $y : \chi \imp \theta \in \Delta_{i}$. Since $\tseq_{i}$ is saturated, we know that $y : \chi \in \Gamma_{i} \subseteq \Gamma$ and $y : \theta \in \Delta_{i} \subseteq \Delta$. By IH, $\M,y \models \chi$ and $\M, y \not\models \theta$. Hence, $\M, y \not\models \chi \imp \theta$.

\item[$\phi = \Box \chi.$] Suppose $y : \Box \chi \in \Gamma$. Let $z \in \W$ with $(y,z) \in \R$. Then, there exist $u_{1}, \ldots, u_{k} \in \lab(\tseq^{*})$ such that $yRu_{1}, \ldots, u_{k}Rz \in \tel$ by definition. Since $\tel$ is the composition of $n$ line sequents, we know that some $i \in [n]$ exists such that $yRu_{1}, \ldots, u_{k}Rz \in \tel_{i}$ for $\tseq_{i} = \tel_{i}, \Gamma_{i} \sar \Delta_{i}$. As $\tseq_{i}$ is saturated, we know that $z : \chi \in \Gamma_{i} \subseteq \Gamma$. By IH, $\M, z \models \chi$, meaning, $\M, y \models \Box \chi$ since $z$ was arbitrary.

\item[$\psi = \Box \chi.$] Suppose $y : \Box \chi \in \Delta$. Then, there exists some stable line sequent $\tseq_{i} = \tel_{i}, \Gamma_{i} \sar \Delta_{i}$ such that $y : \Box \chi \in \Delta_{i}$. Since $\tseq_{i}$ is stable and $y : \Box \chi \in \Delta_{i}$, it cannot be the case that $y$ is a leaf; otherwise, $\tseq_{i}$ would not be stable. Hence, there must exist a $z \in \lab(\tseq^{*})$ such that $yRz \in \tel_{i}$. Let us consider the $\ndrule$ application on the path from $\tseq_{i}$ to the root of $(\ctv',\cte')$ which introduced $yRz$ with $z$ fresh, and let $\tseq_{i}' = \tel_{i}', \Gamma_{i}' \sar \Delta_{i}'$ be the conclusion of $\ndrule$. Since $y : \Box \chi \in \Delta_{i}$, by inspection of the rules applied during proof-search, one will find that $y : \Box \chi \in \Delta_{i}'$. Furthermore, observe that $\ctl(\tseq_{i}') = \fval$, meaning, every premise of the $\ndrule$ application will be labeled with $\fval$ as well. Thus, there will exist some premise $\tseq_{j}' = \tel_{j}', \Gamma_{j}' \sar \Delta_{j}'$ such that $yRu \in \tel_{j}'$ and $u : \chi \in \Delta_{j}'$. By the definition of $(\ctv',\cte')$, there will exist a stable tree sequent $\tseq_{j} = \tel_{j}, \Gamma_{j} \sar \Delta_{j}$ that is a leaf in $(\ctv',\cte')$ above $\tseq_{j}'$ such that $yRu \in \tel_{j}$ and $u : \chi \in \Delta_{j}$. Consequently, $yRu \in \tel$ and $u : \chi \in \Delta$, so by the definition of $\M$ and IH, we know that there exists a $u \in \W$ such that $(y,u) \in \R$ and $\M, u \not\models \chi$. This implies that $\M, y \not\models \Box \chi$.

\end{description}
Observe that $x : \phi \in \Delta$; by claim (ii), $\M \not\models \phi$.
\end{proof}

Finally, we show that $\prove$ terminates within $\pspace$. The proof of this result demonstrates the interaction between termination and the diagonal formula $y : \Box \phi$ introduced by the $\boxr$ rule. In particular, the presence of the diagonal formula bounds the depth of tree sequents generated during proof-search, eliminating the need for loop-checking despite the presence of transitivity--a phenomenon previously observed in Gentzen sequent calculi~\cite{SamVal80} and labeled sequent calculi~\cite{Neg14}. 



\begin{theorem}\label{thm:termination}
For any $\phi \in \lang$, $\prove(\sar x : \phi)$ terminates in $\pspace$.
\end{theorem}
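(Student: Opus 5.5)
The plan is to bound three quantities along any single branch of the computation tree $\ct(x : \phi)$: the number of formulae per label, the depth of the line sequent, and the length of the branch. The recursion stack of $\prove$ holds only the sequents on the current branch, each of polynomial size, plus $O(1)$ bookkeeping per frame recording which conjunct or disjunct is being explored. Since $\&\&$ and $\|$ are evaluated sequentially, this gives polynomial space. Termination follows because each branch is finite and each node has finitely many children (at most $|\phi|$ for $\ndrule$).

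\textbf{Step 1 (subformula bound).} By inspection of lines 5--19, every labeled formula in any sequent reached from $\sar x:\phi$ is $y:\chi$ with $\chi \in \sufo(\phi)$. Hence $\Gamma(y)\cup\Delta(y)$ has at most $2|\phi|$ elements for each label $y$. By the line-like observation preceding Theorem~\ref{lem:true-gives-proof}, every sequent is a line $x_1Rx_2,\ldots,x_{k-1}Rx_k$, and rules other than $\ndrule$ only add formulae at the leaf $x_k$, which is end-activity. Between two consecutive $\ndrule$ applications, each of lines 5--13 strictly enlarges $\Gamma(x_k)\cup\Delta(x_k)$. So at most $2|\phi|$ steps occur between consecutive $\ndrule$ steps.

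\textbf{Step 2 (depth bound via the diagonal formula).} This is the main obstacle. I will show that the set $B_i := \{\Box\chi : x_i:\Box\chi\in\Gamma\}$ strictly grows along the line, so $k \le |\phi|+1$. Suppose $\ndrule$ creates $x_{i+1}$ from $x_i$ with principal $x_i:\Box\chi$. The premise is saturated at $x_i$ and passes saturation condition $(\id)$, so $x_i:\Box\chi\notin\Gamma$. Hence $\Box\chi\notin B_i$, while $\Box\chi \in B_{i+1}$ as the diagonal formula. Moreover, once line 11--13 fires at the new leaf, every $\Box\psi\in B_i$ yields $x_{i+1}:\Box\psi\in\Gamma$ by the combined $\fourru$/$\boxl$ step, so $B_i\subseteq B_{i+1}$. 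The subtle point is that $\ndrule$ may be applied to $x_{i+1}$ only once it is saturated, so lines 11--13 have already propagated all of $B_i$ (inductively $B_i\supseteq B_{i-1}\supseteq\cdots$). Because the boxes of $x_i$ in $\Gamma$ are fixed once $x_i$ is no longer a leaf, by end-activity, the chain is well defined. Since each $B_i \subseteq \sufo(\phi)$, the depth of every line sequent is at most $|\phi|+1$.

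\textbf{Step 3 (assembling the bound).} Each sequent on a branch therefore has at most $|\phi|+1$ labels, each carrying $O(|\phi|)$ formulae of size $O(|\phi|)$, so its size is $O(|\phi|^3)$, taking labels as indices. A branch has length at most $(|\phi|+1)(2|\phi|+1)$. The lines 1--4 and the saturation/stability tests are polynomial-time checks. The recursion depth is thus polynomial, and each stack frame stores one polynomial-size sequent together with the index of the current alternative. The total space is $O(|\phi|^5)$ or so, which establishes membership in $\pspace$. The running time may be exponential, which is expected.
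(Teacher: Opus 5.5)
Your proof is correct and follows essentially the same route as the paper. Your strictly increasing chain $B_1 \subsetneq B_2 \subsetneq \cdots$ repackages the paper's pigeonhole argument: a box principal twice on a branch would, through the diagonal formula propagated to later labels by the $\boxlc$ saturation condition, give an $\idii$ instance before $\ndrule$ fires. Your per-label step count in Step 1 replaces the paper's global size measure $\size{\tseq}$. It should be read as $|\Gamma(x_k)|+|\Delta(x_k)|$ rather than the set $\Gamma(x_k)\cup\Delta(x_k)$, because $\impr$ may add to $\Gamma(x_k)$ a formula already in $\Delta(x_k)$, in which case the union does not grow.
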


\begin{proof} Let $N := |\sufo(\phi)|$, that is, $N$ is the number of subformulae of $\phi$. Recall that every computation tree generated by $\prove$ is line-like and end-active. We first show that for every line sequent of the form $xRy_{1}, \ldots, y_{n{-}1}Ry_{n}, \Gamma \sar \Delta$ generated during the computation of $\prove(\sar x : \phi)$, we have $n \leq N$. In other words, the length (i.e., number of relational atoms) of any line sequent is bounded by $N$.

For a contradiction, suppose the opposite, i.e., a line sequent $xRy_{1}, \ldots, y_{n{-}1}Ry_{n}, \Gamma \sar \Delta$ was generated by $\prove(\sar x : \phi)$ such that $n > N$. Let $\Box \psi_{1}, \ldots, \Box \psi_{k} \in \sufo(\phi)$ be all $\Box$-subformulae of $\phi$. By the definition of a subformula, it must be the case that $k \leq N$. Observe that each relational atom was introduced via a $\ndrule$ application with a principal formula $\Box \psi_{i}$, and so, there must exist $n$ applications of $\ndrule$ along the branch $\branch$ of the computation tree from the input $\sar x : \phi$ to $xRy_{1}, \ldots, y_{n{-}1}Ry_{n}, \Gamma \sar \Delta$. By the pigeonhole principle, it follows that a subformula $\Box \psi_{\ell}$ was principal \emph{twice} along the branch $\branch$, i.e., for some $i < j \in [n]$, we have $y_{i} : \Box \psi_{\ell}$ and $y_{j} : \Box \psi_{\ell}$ occurring as principal in $\ndrule$ applications. That is to say, the branch $\branch$ is of the following form:
\begin{center}
\AxiomC{$xRy_{1}, \ldots, y_{n{-}1}Ry_{n}, \Gamma \sar \Delta$}
\noLine
\UnaryInfC{$\vdots$}
\RightLabel{$\ndrule$}
\UnaryInfC{$xRy_{1}, \ldots, y_{i{-}1}Ry_{i}, \ldots, y_{j{-}1}Ry_{j},  \Sigma, y_{i+1} : \Box \psi_{\ell} \sar \Pi, y_{j} : \Box \psi_{\ell}$}
\noLine
\UnaryInfC{$\vdots$}
\noLine
\UnaryInfC{$xRy_{1}, \ldots, y_{i{-}1}Ry_{i}, y_{i}Ry_{i{+}1}, \Gamma', y_{i+1} : \Box \psi_{\ell} \sar \Delta', y_{i+1} : \psi_{\ell}$}
\RightLabel{$\ndrule$}
\UnaryInfC{$xRy_{1}, \ldots, y_{i{-}1}Ry_{i}, \Gamma' \sar \Delta', y_{i} : \Box \psi_{\ell}$}
\noLine
\UnaryInfC{$\vdots$}
\noLine
\UnaryInfC{$\sar x : \phi$}
\DisplayProof
\end{center}
Observe that $\ndrule$ is only applied to a saturated sequent. Therefore, it must be the case that $y_{j} : \Box \psi_{\ell} \in \Sigma$ since $i < j$ and by the $\boxlc$ condition. However, this implies that
$$
xRy_{1}, \ldots, y_{i{-}1}Ry_{i}, \ldots, y_{j{-}1}Ry_{j},  \Sigma, y_{i+1} : \Box \psi_{\ell} \sar \Pi, y_{j} : \Box \psi_{\ell}
$$
is an instance of $\idii$, meaning, $\prove$ would have halted and output $\true$ rather than apply $\ndrule$ bottom-up, giving a contradiction. Therefore, $n \leq N$, i.e., every line sequent has its length bounded by $N$.

Let us define the \emph{size} of a sequent $\tseq = \tel, \Gamma \sar \Delta$ to be $\size{\tseq} = |\tel| + |\Gamma| + |\Delta|$. It is not difficult to show that each bottom-up rule application in the algorithm strictly increases the size of a line sequent along its branch of the corresponding computation tree. The maximum size of any line sequent generated during proof-search is bounded as follows: there are at most $N$ relational atoms that can occur in a line sequent as established above. It follows that there are at most $N{+}1$ labels that can occur in a line sequent, each contributing at most $N$ labeled formulae to the antecedent and at most $N$ labeled formulae to the consequent. Hence, $\size{\tseq} \leq N + 2N(N{+}1) = 2N^2 + 3N$ for any line sequent~$T$ generated during proof-search.  Since the input $\vdash x : \phi$ has size $1$ and each rule application along a branch increases size by at least~$1$, the number of rule applications along any branch is at most $2N^2 + 3N - 1 = \mathcal{O}(N^2)$. The algorithm $\mathsf{prove}$ is a recursive procedure that explores branches of a computation tree in a depth-first manner, meaning, the total space consumed at any point during execution is $\mathcal{O}(N^2) \times \mathcal{O}(N^2) = \mathcal{O}(N^4)$ due to the maximum size of line sequents and maximum length of branches (along with minor bookkeeping overhead). Since $N \leq |\phi|$, the algorithm runs in space polynomial in~$|\phi|$.
\end{proof}

\paragraph{A Note on Linear Nested Sequents.} A \emph{linear nested sequent (LNS)} is an expression of the form $X_{1} \sar Y_{1} \sslash \cdots \sslash X_{n} \sar Y_{n}$ such that for each $i \in [n]$, $X_i$ and $Y_i$ are finite sets of formulae from $\lang$. We use $\lnsg$ to denote linear nested sequents. It is well known that line sequents are syntactic variants of LNSs and that the two kinds of expressions are mutually translatable (see~\cite{LelPim15}).

Recently, a linear nested sequent calculus $\mathsf{LNGL}$ for $\logicgl$ was introduced in~\cite{Lyo25}, formulated over the signature $\set{\neg,\lor,\Box}$. By standard definitional translations, this calculus can be straightforwardly adapted to our present signature $\set{\bot,\imp,\Box}$. 
From this perspective, the results of the present section may be viewed as providing a variant of $\mathsf{LNGL}$. Indeed, our analysis shows that whenever a formula $\phi$ is valid, it admits a line-like, end-active proof, which can be transformed into an LNS proof by adapting the translation from line sequents to LNSs described in~\cite{LelPim15}.

More broadly, our results shed light on the connection between depth-first proof-search and linear nested sequent calculi. In particular, they suggest that sound and complete LNS systems can be \emph{extracted} from depth-first proof-search procedures formulated in tree (i.e., nested) sequent calculi. 
For completeness, we have included the LNS calculus obtained from our proof-search algorithm in Figure~\ref{fig:lns-calc}. We remark that our algorithm can be viewed as performing proof-search in this LNS system.


\begin{figure}
    \centering

\begin{center}
\begin{tabular}{c c c}
\AxiomC{$\phantom{\lnsg}$}
\RightLabel{$\idi$}
\UnaryInfC{$\lnsg \lsep X, p \sar p, Y$}
\DisplayProof

&

\AxiomC{$\phantom{\lnsg}$}
\RightLabel{$\idii$}
\UnaryInfC{$\lnsg \lsep X, \Box \phi \sar \Box \phi, Y $}
\DisplayProof

&

\AxiomC{$\phantom{\lnsg}$}
\RightLabel{$\botl$}
\UnaryInfC{$\lnsg \lsep X, \bot \sar Y $}
\DisplayProof
\end{tabular}
\end{center}

\begin{center}
\begin{tabular}{c c}
\AxiomC{$\lnsg \lsep X, \phi \imp \psi \sar \phi, Y$}
\AxiomC{$\lnsg \lsep X, \phi \imp \psi, \psi \sar Y$}
\RightLabel{$\impl$}
\BinaryInfC{$\lnsg \lsep X, \phi \imp \psi \sar Y$}
\DisplayProof

&

\AxiomC{$\lnsg \lsep X, \Box \phi \sar Y \lsep Z, \Box \phi \sar W$}
\RightLabel{$\fourru$}
\UnaryInfC{$\lnsg \lsep X, \Box \phi \sar Y \lsep Z \sar W$}
\DisplayProof
\end{tabular}
\end{center}

\begin{center}
\begin{tabular}{c c c}
\AxiomC{$\lnsg \lsep X, \phi \sar \psi, \phi \imp \psi, Y$}
\RightLabel{$\impr$}
\UnaryInfC{$\lnsg \lsep X \sar \phi \imp \psi, Y$}
\DisplayProof

&

\AxiomC{$\lnsg \lsep X, \Box \phi \sar Y \lsep Z, \phi \sar W$}
\RightLabel{$\boxl$}
\UnaryInfC{$\lnsg \lsep X, \Box \phi \sar Y \lsep Z \sar W$}
\DisplayProof

&

\AxiomC{$\lnsg \lsep X \sar \Box \phi, Y \lsep \Box \phi \sar \phi$}
\RightLabel{$\boxr$}
\UnaryInfC{$\lnsg \lsep X \sar \Box \phi, Y$}
\DisplayProof
\end{tabular}
\end{center}

\caption{A linear nested sequent calculus extracted from proof-search.\label{fig:lns-calc}}
\end{figure}

\section{Concluding Remarks}\label{sec:conc}

In this paper, we answered the question posed by
Poggiolesi~\cite{Pog09b} concerning a syntactic decidability
proof in the tree-hypersequent calculus $\calc$,
and achieved the $\pspace$ goal identified by Maggesi and Perini Brogi~\cite{MagPer23} for proof-search in expressive sequent formalisms for $\logicgl$. Our work shows how to reduce the complexity of proof-search in expressive sequent formalisms, where complexity is usually non-optimal (cf.~\cite{MagPer23}). To ensure complexity-optimality, we developed a \emph{linearization method} for proof-search, which constructs a computation tree of the input formula. We showed how line-like, end-active proofs and counter-models can be extracted from these computation trees when proof-search succeeds or fails, respectively. We note that our algorithm $\prove$ itself outputs only a Boolean value; counter-model extraction from computation trees serves as a theoretical device for establishing the correctness of the algorithm when proof-search fails (Theorem~\ref{lem:false-gives-refute}).


For future work, it would be interesting to investigate how our proof-search method can be adapted to decide other modal logics. Such an adaptation could also yield new LNS systems for these logics, which would be worth studying proof-theoretically, as such systems often enjoy desirable properties such as admissibility and invertibility of rules while producing more compact proofs~\cite{Lel15,Lyo25}. Furthermore, one could study the properties of the LNS system for $\logicgl$ (and the analogous system in~\cite{Lyo25}) in greater detail, including admissibility of structural rules, invertibility of rules, and syntactic cut admissibility.

\bibliographystyle{eptcs}
\bibliography{biblio}

@article{Avr84, 
title={On Modal Systems having Arithmetical Interpretations}, 
volume={49}, 
DOI={10.2307/2274147}, 
number={3}, 
journal={Journal of Symbolic Logic}, 
author={Avron, Arnon}, 
year={1984}, 
pages={935–942}}

@book{ChaZak97,
  title={Modal Logic},
  author={Chagrov, Alexander and Zakharyaschev, Michael},
  year={1997},
  publisher={Oxford University Press},
  doi={10.1093/oso/9780198537793.001.0001}
}

@inproceedings{GorKel07,
  title={Automated Proof Search in G{\"o}del-L{\"o}b Provability Logic},
  author={Gor{\'e}, Rajeev and Kelly, Jack},
  booktitle={abstract, British Logic Colloquium},
  year={2007}
}

@article{AbaGor09,
title = {The Tableau Workbench},
journal = {Electronic Notes in Theoretical Computer Science},
volume = {231},
pages = {55-67},
year = {2009},
note = {Proceedings of the 5th Workshop on Methods for Modalities (M4M5 2007)},
issn = {1571-0661},
doi = {10.1016/j.entcs.2009.02.029},
author = {Pietro Abate and Rajeev Goré}
}

@article{SamVal82,
 ISSN = {00223611, 15730433},
 author = {Giovanni Sambin and Silvio Valentini},
 journal = {Journal of Philosophical Logic},
 number = {3},
 pages = {311--342},
 publisher = {Springer},
 title = {The Modal Logic of Provability. The Sequential Approach},
 urldate = {2024-06-12},
 volume = {11},
 year = {1982},
 doi = {10.1007/BF00293433}
}

@InProceedings{Lel15,
author="Lellmann, Bj{\"o}rn",
editor="De Nivelle, Hans",
title="Linear Nested Sequents, 2-Sequents and Hypersequents",
booktitle="Automated Reasoning with Analytic Tableaux and Related Methods",
year="2015",
series    = {Lecture Notes in Computer Science},
volume    = {9323},
publisher="Springer International Publishing",
address="Cham",
pages="135--150",
doi={10.1007/978-3-319-24312-2_10}
}

@inproceedings{LyoGom22,
    title     = {{Automating Reasoning with Standpoint Logic via Nested Sequents}},
    author    = {Lyon, Tim S. and Gómez Álvarez, Lucía},
    booktitle = {{Proceedings of the 19th International Conference on Principles of Knowledge Representation and Reasoning}},
    pages     = {257--266},
    year      = {2022},
    month     = {8},
    doi       = {10.24963/kr.2022/26}
}

@book{Ros19,
  author    = {Kenneth H. Rosen},
  title     = {Discrete Mathematics and Its Applications},
  edition   = {7},
  publisher = {McGraw–Hill},
  address   = {New York},
  year      = {2012}
}

@article{SamVal80,
 ISSN = {00393215, 15728730},
 author = {G. Sambin and S. Valentini},
 journal = {Studia Logica: An International Journal for Symbolic Logic},
 number = {2/3},
 pages = {245--256},
 publisher = {Springer},
 title = {A Modal Sequent Calculus for a Fragment of Arithmetic},
 urldate = {2024-06-12},
 volume = {39},
 year = {1980},
 doi={10.1007/BF00370323}
}

@article{Sol76,
	author = {Solovay, Robert M. },
	date = {1976/09/01},
	doi = {10.1007/BF02757006},
	id = {Solovay1976},
	isbn = {1565-8511},
	journal = {Israel Journal of Mathematics},
	number = {3},
	pages = {287--304},
	title = {Provability interpretations of modal logic},
	volume = {25},
	year = {1976}
}

@book{Seg71,
	author = {Krister Segerberg},
	publisher = {Uppsala: Filosofiska Föreningen och Filosofiska Institutionen vid Uppsala Universitet},
	title = {An Essay in Classical Modal Logic},
	year = {1971}
}

@article{Sha14,
	author = {Shamkanov, D.  S. },
	date = {2014/09/01},
	doi = {10.1134/S0001434614090326},
	id = {Shamkanov2014},
	isbn = {1573-8876},
	journal = {Mathematical Notes},
	number = {3},
	pages = {575--585},
	title = {Circular proofs for the {G{\"o}del-L{\"o}b} provability logic},
	volume = {96},
	year = {2014}}

@book{Min00,
  title={A Short Introduction to Intuitionistic Logic},
  author={Mints, Grigori},
  publisher={Springer New York, NY},
  doi={10.1007/b115304},
  year={2000}
}

@article{Neg14,
	author = {Negri, Sara},
	date = {2014/03/01},
	doi = {10.1007/s11787-014-0097-1},
	id = {Negri2014},
	isbn = {1661-8300},
	journal = {Logica Universalis},
	number = {1},
	pages = {25--60},
	title = {Proofs and Countermodels in Non-Classical Logics},
	volume = {8},
	year = {2014}}

@Incollection{LelPog24,
author="Lellmann, Bj{\"o}rn
and Poggiolesi, Francesca",
editor="Weiss, Yale
and Birman, Romina",
title="Nested Sequents or Tree-Hypersequents---A Survey",
bookTitle="Saul Kripke on Modal Logic",
year="2024",
publisher="Springer International Publishing",
address="Cham",
pages="243--301",
isbn="978-3-031-57635-5",
doi="10.1007/978-3-031-57635-5\_11"
}

@InProceedings{IshKik07,
author="Ishigaki, Ryo
and Kikuchi, Kentaro",
editor="Olivetti, Nicola",
title={Tree-Sequent Methods for Subintuitionistic Predicate Logics},
booktitle="Automated Reasoning with Analytic Tableaux and Related Methods",
series    = {Lecture Notes in Computer Science},
volume    = {4548},
pages     = {149--164},
year="2007",
publisher="Springer Berlin Heidelberg",
address="Berlin, Heidelberg",
doi={10.1007/978-3-540-73099-6_13}
}

@book{Vig00,
  title={Labelled Non-Classical Logics},
  author={Vigan{\`o}, Luca},
  year={2000},
  publisher={Springer Science \& Business Media},
  doi = {10.1007/978-1-4757-3208-5}
}

@PhdThesis{Sim94,
  title={The Proof Theory and Semantics of Intuitionistic Modal Logic},
  author={Simpson, Alex K},
  year={1994},
  school={University of Edinburgh. College of Science and Engineering. School of Informatics}
}

@article{Bru09,
  author    = {Kai Br{\"{u}}nnler},
  title     = {Deep Sequent Systems for Modal Logic},
  journal   = {Archive for Mathematical Logic},
  volume    = {48},
  number    = {6},
  pages     = {551--577},
  year      = {2009},
  doi       = {10.1007/s00153-009-0137-3},
  bibsource = {dblp computer science bibliography, https://dblp.org}
}

@InProceedings{Lyo25,
  author =	{Lyon, Tim S.},
  title =	{{Unifying Sequent Systems for G\"{o}del-L\"{o}b Provability Logic via Syntactic Transformations}},
  booktitle =	{33rd EACSL Annual Conference on Computer Science Logic (CSL 2025)},
  pages =	{42:1--42:23},
  series =	{Leibniz International Proceedings in Informatics (LIPIcs)},
  ISBN =	{978-3-95977-362-1},
  ISSN =	{1868-8969},
  year =	{2025},
  volume =	{326},
  editor =	{Endrullis, J\"{o}rg and Schmitz, Sylvain},
  publisher =	{Schloss Dagstuhl -- Leibniz-Zentrum f{\"u}r Informatik},
  address =	{Dagstuhl, Germany},
  URN =		{urn:nbn:de:0030-drops-227992},
  doi =		{10.4230/LIPIcs.CSL.2025.42}
}

@inproceedings{ManKas24,
  author    = {Akinori Maniwa and Ryo Kashima},
  title     = {Syntactic Cut-Elimination for Provability Logic {GL} via Nested Sequents},
  editor    = {Andrzej Indrzejczak and Micha{\l} Zawidzki},
  booktitle = {Proceedings of the Workshop on Non-Classical Logics: Theory and Applications (NCL 2024)},
  series    = {Electronic Proceedings in Theoretical Computer Science},
  volume    = {415},
  pages     = {93--108},
  year      = {2024},
  doi       = {10.4204/EPTCS.415.11}
}

@InProceedings{LelPim15,
author="Lellmann, Bj{\"o}rn
and Pimentel, Elaine",
editor="Davis, Martin
and Fehnker, Ansgar
and McIver, Annabelle
and Voronkov, Andrei",
title="Proof Search in Nested Sequent Calculi",
booktitle="Logic for Programming, Artificial Intelligence, and Reasoning",
year="2015",
publisher="Springer Berlin Heidelberg",
address="Berlin, Heidelberg",
pages="558--574",
doi={10.1007/978-3-662-48899-7_39}
}

@article{Pog09b, 
title={A Purely Syntactic and Cut-free Sequent Calculus for the Modal Logic of Provability}, 
volume={2}, 
DOI={10.1017/S1755020309990244}, 
number={4}, 
journal={The Review of Symbolic Logic}, 
author={Poggiolesi, Francesca}, 
year={2009}, 
pages={593–611}}

@incollection{Pog09,
  author    = {Francesca Poggiolesi},
  editor    = {David Makinson and
               Jacek Malinowski and
               Heinrich Wansing},
  title     = {The Method of Tree-Hypersequents for Modal Propositional Logic},
  booktitle = {Towards Mathematical Philosophy},
  series    = {Trends in logic},
  volume    = {28},
  pages     = {31--51},
  publisher = {Springer},
  year      = {2009},
  doi       = {10.1007/978-1-4020-9084-4\_3},
  bibsource = {dblp computer science bibliography, https://dblp.org}
}

@article{MagPer23,
	author = {Maggesi, Marco and Perini Brogi, Cosimo},
	date = {2023/08/29},
	doi = {10.1007/s10817-023-09677-z},
	id = {Maggesi2023},
	isbn = {1573-0670},
	journal = {Journal of Automated Reasoning},
	number = {3},
	pages = {29},
	title = {Mechanising G{\"o}del--L{\"o}b Provability Logic in HOL Light},
	volume = {67},
	year = {2023}}

@article{Bul92,
    AUTHOR = {Bull, Robert A.},
     TITLE = {Cut Elimination for Propositional Dynamic Logic without *},
   JOURNAL = {Zeitschrift f\"ur Mathematische Logik und Grundlagen der Mathematik},
  FJOURNAL = {Zeitschrift f\"ur Mathematische Logik und Grundlagen der Mathematik},
    VOLUME = {38},
      YEAR = {1992},
    NUMBER = {2},
     PAGES = {85--100},
     doi = {10.1002/malq.19920380107}
}

@article{Kas94,
    AUTHOR = {Kashima, Ryo},
     TITLE = {Cut-free Sequent Calculi for Some Tense Logics},
   JOURNAL = {Studia Logica},
    VOLUME = {53},
      YEAR = {1994},
    NUMBER = {1},
     PAGES = {119--135},
     doi={10.1007/BF01053026}
}

@inproceedings{GorRam12,
  author    = {Rajeev Gor{\'{e}} and
               Revantha Ramanayake},
  editor    = {Thomas Bolander and
               Torben Bra{\"{u}}ner and
               Silvio Ghilardi and
               Lawrence S. Moss},
  title     = {Labelled Tree Sequents, Tree Hypersequents and Nested (Deep) Sequents},
  booktitle = {Advances in Modal Logic 9},
  pages     = {279--299},
  publisher = {College Publications},
  year      = {2012},
  url       = {http://www.aiml.net/volumes/volume9/Gore-Ramanayake.pdf},
  bibsource = {dblp computer science bibliography, https://dblp.org}
}

@article{LyoOst24,
author = {Lyon, Tim S. and Ostropolski-Nalewaja, Piotr},
title = {Foundations for an Abstract Proof Theory in the Context of Horn Rules},
year = {2026},
publisher = {Association for Computing Machinery},
address = {New York, NY, USA},
issn = {1529-3785},
doi = {10.1145/3821208},
note = {Just Accepted},
journal = {ACM Transactions on Computational Logic},
month = jun
}

@InProceedings{LyoTiuGorClo20,
  author =	{Lyon, Tim and Tiu, Alwen and Gor\'{e}, Rajeev and Clouston, Ranald},
  title =	{{Syntactic Interpolation for Tense Logics and Bi-intuitionistic Logic via Nested Sequents}},
  booktitle =	{28th EACSL Annual Conference on Computer Science Logic (CSL 2020)},
  pages =	{28:1--28:16},
  series =	{Leibniz International Proceedings in Informatics (LIPIcs)},
  ISBN =	{978-3-95977-132-0},
  ISSN =	{1868-8969},
  year =	{2020},
  volume =	{152},
  editor =	{Fern\'{a}ndez, Maribel and Muscholl, Anca},
  publisher =	{Schloss Dagstuhl -- Leibniz-Zentrum f{\"u}r Informatik},
  address =	{Dagstuhl, Germany},
  URN =		{urn:nbn:de:0030-drops-116713},
  doi =		{10.4230/LIPIcs.CSL.2020.28}
}

@inproceedings{TiuIanGor12,
  author    = {Alwen Tiu and
               Egor Ianovski and
               Rajeev Gor{\'{e}}},
  editor    = {Thomas Bolander and
               Torben Bra{\"{u}}ner and
               Silvio Ghilardi and
               Lawrence S. Moss},
  title     = {Grammar Logics in Nested Sequent Calculus: Proof Theory and Decision
               Procedures},
  booktitle = {Advances in Modal Logic 9},
  pages     = {516--537},
  publisher = {College Publications},
  year      = {2012},
  bibsource = {dblp computer science bibliography, https://dblp.org}
}


\appendix

\section{Additional Material for Section~\ref{sec:proof-search}}\label{app:extra-material-sec-proof-search}

\paragraph{Computation Tree.} For the sake of completeness, we add the formal definition of the computation tree $\ct(x : \phi) = (\ctv,\cte,\ctl)$ corresponding to $\prove(\sar x : \phi)$. We define $(\ctv,\cte)$ root-first based on the number of recursive calls in $\prove(\sar x : \phi)$. Initially, our structure is taken to be $(\ctv,\cte) := (\set{\sar x : \phi},\emptyset)$. Once the finite structure $(\ctv,\cte)$ has been built, we define (1) $\ctl(\tseq) = \tval$ \iffi $\prove(\tseq) = \true$ and (2) $\ctl(\tseq) = \fval$ \iffi $\prove(\tseq) = \false$,  for all $\tseq \in \ctv$.
\begin{description}[leftmargin=!, labelwidth=0em, labelsep=.25em]

\item[$\bullet$] If lines 1-2 are executed, then stop building $(\ctv,\cte)$ along the branch ending at the initial sequent;

\item[$\bullet$] If lines 3-4 are executed, then stop building $(\ctv,\cte)$ along the branch ending at the stable sequent;

\item[$\bullet$] If lines 5-7 are executed, then set $\ctv := \ctv \cup \set{\tseq_{1},\tseq_{2}}$ and $\cte \ := \ \cte \cup \set{(\tseq,\tseq_{1}),(\tseq,\tseq_{2})}$ with $\tseq_{1} := \tel,\Gamma' \sar \Delta$ and $\tseq_{2} := \tel,\Gamma \sar \Delta'$;

\item[$\bullet$] If lines 8-10 are executed, then set $\ctv := \ctv \cup \set{\tseq'}$ and $\cte \ := \ \cte \cup \set{(\tseq,\tseq')}$ with $\tseq' := \tel,\Gamma' \sar \Delta'$;

\item[$\bullet$] If lines 11-13 are executed, then set $\ctv := \ctv \cup \set{\tseq'}$ and $\cte \ := \ \cte \cup \set{(\tseq,\tseq')}$ with $\tseq' := \tel,\Gamma' \sar \Delta$;

\item[$\bullet$] If lines 14-19 are executed, then set $\ctv := \ctv \cup \set{\tseq_{1}, \ldots, \tseq_{n}}$ and $\cte \ := \ \cte \cup \set{(\tseq,\tseq_{1}), \ldots, (\tseq,\tseq_{n})}$ with $\tseq_{i} := \tel_{i},\Gamma_{i} \sar \Delta_{i}$.

\end{description}

\begin{customthm}{\ref{lem:true-gives-proof}}
If $\prove( \sar x : \phi) = \true$, then $\sar x : \phi$ has a line-like, end-active proof in $\calc$, that is, the input $\phi$ is valid.
\end{customthm}

\begin{proof} Suppose $\prove(\sar x : \phi) = \true$ and let $\ct(x : \phi) = (\ctv,\cte,\ctl)$ be the corresponding computation tree. We construct a proof $\prf := (\ctv',\cte')$ with $\ctv' \subseteq \ctv$ and $\cte' \subseteq \cte$ by processing $\ct(x : \phi)$ in a root-first manner and pruning extraneous branches. We define $(\ctv',\cte')$ as follows:
\begin{description}

\item[$(1)$] Let $( \sar x : \phi) \in \ctv'$ and observe that $\ctl( \sar x : \phi) = \tval$ by assumption;

\item[$(2)$] If $\ctl(\tseq) = \tval$ and $\tseq \in \ctv'$ concludes a unary rule $ \ru \in \calc \setminus \set{\boxr}$ in $\ct(x : \phi)$ with $\tseq' \in \ctv$ the premise, then $\tseq' \in \ctv'$ and $(\tseq,\tseq') \in \cte'$;

\item[$(3)$] If $\ctl(\tseq) = \tval$ and $\tseq \in \ctv'$ is the conclusion of $\impl$ in $\ct(x : \phi)$ with $\tseq', \tseq'' \in \ctv$ the premises, then $\tseq', \tseq'' \in \ctv'$ and $(\tseq,\tseq'), (\tseq,\tseq'') \in \cte'$;

\item[$(4)$] If $\ctl(\tseq) = \tval$ and $\tseq \in \ctv'$ is the conclusion of $\dbr$ in $\ct(x : \phi)$ with $\tseq_{1}, \ldots, \tseq_{n} \in \ctv$ the premises, then we choose the premise $\tseq_{i}$ such that $\ctl(\tseq_{i}) = \tval$ (which is guaranteed to exist because $\ctl(\tseq) = \tval$), and let $\tseq_{i} \in \ctv'$ and $(\tseq,\tseq_{i}) \in \cte'$.

\end{description}
In the above definition, one starts at the root of $\ct(x : \phi)$ and retains rule applications in $\calc \setminus \set{\boxr}$ via clauses (2) and (3), while pruning branches in clause (4) and only retaining a single premise so that the $\dbr$ application becomes a $\boxr$ application. 
Hence, all rule applications in $\prf$ will be in $\calc$. Moreover, every tree sequent in $\prf = (\ctv',\cte')$ is guaranteed to be labeled with $\tval$ via $\ctl$ by definition; hence, all leaves will be instances of $\idi$, $\idii$, or $\botl$. One can also verify that $\prf$ is both line-like and end-active since the initial computation tree $\ct(x : \phi)$ satisfied these properties. 
Therefore, $\phi$ is provable, and thus valid, by soundness (see Theorem \ref{thm:tree-seq-properties}).
\end{proof}

\iffalse
\begin{customthm}{\ref{lem:false-gives-refute}}
If $\prove( \sar x : \phi) = \false$, then a model $\M = (\W,\R,\V)$ can be extracted from the corresponding computation tree such that $\M \not\models \phi$.
\end{customthm}

\begin{proof} Suppose $\prove(\sar x : \phi) = \false$ and let $\ct(x : \phi) = (\ctv,\cte,\ctl)$ be the corresponding computation tree. We prune the computation tree to obtain a structure $(\ctv',\cte')$ from which we can extract a counter-model for $\phi$. Let us define $(\ctv',\cte')$ as follows:
\begin{description}

\item[$(1)$] Let $( \sar x : \phi) \in \ctv'$ and observe that $\ctl( \sar x : \phi) = \fval$ by assumption;

\item[$(2)$] If $\ctl(\tseq) = \fval$ and $\tseq \in \ctv'$ concludes a unary rule $ \ru \in \calc \setminus \set{\boxr}$ in $\ct(x : \phi)$ with $\tseq' \in \ctv$ the premise, then $\tseq' \in \ctv'$ and $(\tseq,\tseq') \in \cte'$;

\item[$(3)$] If $\ctl(\tseq) = \fval$ and $\tseq \in \ctv'$ is the conclusion of $\impl$ in $\ct(x : \phi)$ with $\tseq_{1}, \tseq_{2} \in \ctv$ the premises, then for some $i \in \set{1,2}$, $\ctl(\tseq_{i}) = \fval$, so for exactly one such $i$, we let $\tseq_{i} \in \ctv'$ and $(\tseq,\tseq_{i}) \in \cte'$;

\item[$(4)$] If $\ctl(\tseq) = \fval$ and $\tseq \in \ctv'$ concludes $\ndrule$ in $\ct(x : \phi)$ with $\tseq_{1}, \ldots, \tseq_{n} \in \ctv$ the premises, then we let $\tseq_{1}, \ldots, \tseq_{n} \in \ctv'$ and $(\tseq,\tseq_{1}), \ldots, (\tseq,\tseq_{n}) \in \cte'$.

\end{description}
Observe that the structure $(\ctv',\cte')$ is obtained by starting at the root and taking the downward closure of sequents labeled with $\false$, with the exception that only a single premise of an $\impl$ application is retained (i.e., one premise labeled with $\false$ is retained while the other is ignored, regardless of its label). Hence, any branching that occurs in $(\ctv',\cte')$ is due to a $\ndrule$ rule application.

Let $\tseq_{1}, \ldots, \tseq_{n} \in \ctv'$ be all stable leaves in the structure $(\ctv',\cte')$, and define $\tseq^{*} := \tseq_{1} \scomp \cdots \scomp \tseq_{n}$. We let $\tseq_{i} := \tel_{i}, \Gamma_{i} \sar \Delta_{i}$ for $i \in [n]$, $\tel := \bigcup_{i \in [n]} \tel_{i}$, $\Gamma := \bigcup_{i \in [n]} \Gamma_{i}$, and $\Delta := \bigcup_{i \in [n]} \Delta_{i}$, so that $\tel^{*} = \tel, \Gamma \sar \Delta$. Recall that $\tseq_{1}, \ldots, \tseq_{n}$ must be line sequents. We now define the model $\M = (\W,\R,\V)$ such that (1) $\W := \lab(\tseq^{*})$, (2) $(y,z) \in \R$ \iffi there exist $u_{1}, \ldots, u_{n} \in \lab(\tseq^{*})$ such that $yRu_{1}, \ldots, u_{n}Rz \in \tel$, and (3) $y \in \V(p)$ \iffi $y : p \in \Gamma$. We now prove that $\M$ is indeed a model. 

First, since $\sar x : \phi$ was the input to proof-search, we know that $x \in \W$, and so, $\W \neq \emptyset$. Second, by construction, we know that $\tel$ is a finite tree, meaning, $\R$ is a finite transitively-closed tree. Hence, $\R$ is both transitive and conversely-wellfounded. Last, observe that $\V$ is well-defined. 

To finish the proof, we prove the following two claims by a mutual induction on the length of $\phi$ and $\psi$, for all $y \in \lab(\tseq^{*})$: (i) if $y : \phi \in \Gamma$, then $\M, y \models \phi$ and (ii) if $y : \psi \in \Delta$, then $\M, y \not\models \psi$.
\begin{description}

\item[$\phi = p.$] If $y : p \in \Gamma$, then by the definition of $\V$, we know that $y \in \V(p)$, and so, $\M, y \models p$.

\item[$\psi = p.$] Let $y : p \in \Delta$. Assume for a contradiction that $y : p \in \Gamma$ as well. Then, it must be the case that for some $i \neq j \in [n]$, $y : p \in \Gamma_{i}$ and $y : p \in \Delta_{j}$. Observe that if $i = j$, then $\tseq_{i}$ would not be saturated, contradicting our assumption that $\tseq_{i}$ is stable. $\tseq_{i}$ and $\tseq_{j}$ must occur along different branches of $(\ctv',\cte')$ as they are distinct leaves. Let $\tseq$ be the closest common ancestor to $\tseq_{i}$ and $\tseq_{j}$. Since $(\ctv',\cte')$ is a tree, such an ancestor must exist, and by what was said above, it must be the conclusion of a $\ndrule$ application. All labels shared by $\tseq_{i}$ and $\tseq_{j}$ must occur in $\tseq$ by construction because after $\ndrule$ is applied, all labels introduced will be fresh and pairwise distinct; consequently, $y \in \lab(\tseq)$. By the definition of $\prove$, we know that all rules in $(\ctv',\cte')$ will be end-active, meaning, after $\ndrule$ is applied bottom-up to $\tseq$, $y : p$ cannot be introduced along the branch to $\tseq_{i}$ or $\tseq_{j}$. Therefore, $y : p$ must occur in the antecedent and consequent of $\tseq$, which contradicts our assumption that $\ndrule$ was applied bottom-up to $\tseq$; $\ndrule$ is only applied to saturated sequents and in this case $\tseq$ would not satisfy condition ($\id$). It follows that $y : p \not\in \Gamma$, meaning $y \not\in \V(p)$, and so, $\M, y \not\models p$.

\item[$\phi = \bot.$] Observe that $y : \bot \not\in \Gamma$ since then some $\tseq_{i}$ would not be saturated, contrary to our assumption. Hence, claim (i) vacuously holds.

\item[$\psi = \bot.$] If $y : \bot \in \Delta$, then claim (ii) vacuously holds because $\M, y \not\models \bot$ by definition.

\item[$\phi = \chi \imp \theta.$] If $y : \chi \imp \theta \in \Gamma$, then there exists some $\tseq_{i}$ with $y : \chi \imp \theta \in \Gamma_{i}$. Since $\tseq_{i}$ is saturated, we know that either $y : \chi \in \Delta_{i} \subseteq \Delta$ or $y : \theta \in \Gamma_{i} \subseteq \Gamma$. By IH, either $\M,y \not\models \chi$ or $\M, y \models \theta$. Either way, $\M, y \models \chi \imp \theta$.

\item[$\psi = \chi \imp \theta.$] If $y : \chi \imp \theta \in \Delta$, then there exists some $\tseq_{i}$ with $y : \chi \imp \theta \in \Delta_{i}$. Since $\tseq_{i}$ is saturated, we know that $y : \chi \in \Gamma_{i} \subseteq \Gamma$ and $y : \theta \in \Delta_{i} \subseteq \Delta$. By IH, $\M,y \models \chi$ and $\M, y \not\models \theta$. Hence, $\M, y \not\models \chi \imp \theta$.

\item[$\phi = \Box \chi.$] Suppose $y : \Box \chi \in \Gamma$. Let $z \in \W$ with $(y,z) \in \R$. Then, there exist $u_{1}, \ldots, u_{n} \in \lab(\tseq^{*})$ such that $yRu_{1}, \ldots, u_{n}Rz \in \tel$ by definition. Since $\tel$ is the composition of $n$ line sequents, we know that some $i \in [n]$ exists such that $yRu_{1}, \ldots, u_{n}Rz \in \tel_{i}$ for $\tseq_{i} = \tel_{i}, \Gamma_{i} \sar \Delta_{i}$. As $\tseq_{i}$ is saturated, we know that $z \in \chi \in \Gamma_{i} \subseteq \Gamma$. By IH, $\M, z \models \chi$, meaning, $\M, y \models \Box \chi$ since $z$ was arbitrary.

\item[$\psi = \Box \chi.$] Suppose $y : \Box \chi \in \Delta$. Then, there exists some stable line sequent $\tseq_{i} = \tel_{i}, \Gamma_{i} \sar \Delta_{i}$ such that $y : \Box \chi \in \Delta_{i}$. Since $\tseq_{i}$ is stable and $y : \Box \chi \in \Delta_{i}$, it cannot be the case that $y$ is a leaf; otherwise, $\tseq_{i}$ would not be stable. Hence, there must exist a $z \in \lab(\tseq^{*})$ such that $yRz \in \tel_{i}$. Let us consider the $\ndrule$ application on the path from $\tseq_{i}$ to the root of $(\ctv',\cte')$ which introduced $yRz$ with $z$ fresh, and let $\tseq_{i}' = \tel_{i}', \Gamma_{i}' \sar \Delta_{i}'$ be the conclusion of $\ndrule$. Since $y : \Box \chi \in \Delta_{i}$, by inspection of the rules applied during proof-search, one will find that $y : \Box \chi \in \Delta_{i}'$. Furthermore, observe that $\ctl(\tseq_{i}') = \fval$, meaning, every premise of the $\ndrule$ application will be labeled with $\fval$ as well. Thus, there will exist some premise $\tseq_{j}' = \tel_{j}', \Gamma_{j}' \sar \Delta_{j}'$ such that $yRu \in \tel_{j}'$ and $u : \chi \in \Delta_{j}'$. By the definition of $(\ctv',\cte')$, there will exist a stable tree sequent $\tseq_{j} = \tel_{j}, \Gamma_{j} \sar \Delta_{j}$ that is a leaf in $(\ctv',\cte')$ above $\tseq_{j}'$ such that $yRu \in \tel_{j}$ and $u : \chi \in \Delta_{j}$. Consequently, $yRu \in \tel$ and $u : \chi \in \Delta$, so by the definition of $\M$ and IH, we know that there exists a $u \in \W$ such that $(y,u) \in \R$ and $\M, u \not\models \chi$. This implies that $\M, y \not\models \Box \chi$.

\end{description}
This concludes the proof.
\end{proof}
\fi

\end{document}